\definecolor{bleu_sombre}{rgb}{0,0,0.6}  \definecolor{rouge_sombre}{rgb}{0.8,0,0}\definecolor{vert_sombre}{rgb}{0,0.6,0}
\theoremstyle{plain}
\newtheorem{theorem}{{Theorem}}[section]
\newtheorem*{theorem*}{{Theorem}}
\newtheorem{proposition}[theorem]{Proposition}
\newtheorem{conjecture}[theorem]{Conjecture}
\newtheorem*{proposition*}{Proposition}
\newtheorem{corollary}[theorem]{Corollary}
\newtheorem*{corollary*}{Corollary}
\newtheorem{lemma}[theorem]{Lemma}
\newtheorem*{lemma*}{Lemma}
\theoremstyle{definition}
\newtheorem{definition}[theorem]{Definition}
\newtheorem*{definition*}{Definition}
\theoremstyle{remark}
\newtheorem{remark}[theorem]{Remark}
\renewcommand{\leq}{\leqslant}	\renewcommand{\geq}{\geqslant}
\newcommand{\R}{\mathbb{R}}	
\newcommand{\C}{\mathbb{C}}
\newcommand{\N}{\mathbb{N}}	
\newcommand{\dx}{\mathrm{d}}	
\newcommand{\eps}{\varepsilon}
\newcommand{\Dom}{\mathrm{Dom}\,}
\renewcommand{\Re}{\mathrm{Re}\,}
\renewcommand{\Im}{\mathrm{Im}\,}
\begin{document}

\title[]{Magnetic quantum currents\\ in the presence of a Neumann wall}

\author[N. Raymond]{Nicolas Raymond}
\address[N. Raymond]{Univ Angers, CNRS, LAREMA, SFR MATHSTIC, F-49000 Angers, France}
\email{nicolas.raymond@univ-angers.fr}

\author[\'E. Soccorsi]{\'Eric Soccorsi}
\address[\'E. Soccorsi]{Aix-Marseille Univ, Universit\'e de Toulon, CNRS, CPT, Marseille, France.}
\email{eric.soccorsi@univ-amu.fr}

\subjclass[2010]{}

\thanks{}

\date{}

\begin{abstract}
We consider the Schrödinger operator with constant transverse magnetic field on a half-plane, endowed with Neumann boundary conditions. We study the low energy currents flowing along the boundary and we establish a Limiting Absorption Principle for the magnetic Neumann Laplacian under perturbation of an electric potential. 
\end{abstract}

\maketitle

\section{Context and motivation}

\subsection{Definition of the main operator}

In this paper, we consider the Hamiltonian with constant magnetic field on the half-plane $\R^2_+=\{(x,y)\in\R^2 : x>0\}$ and Neumann boundary condition:
\begin{equation}\label{eq.0}
\mathscr{L}_h:=D^2_x+(hD_y-x)^2\,,\quad D=-i\partial\,,\quad h>0\,,
\end{equation}
acting on the domain
\begin{multline*}
\Dom(\mathscr{L}_h)=\Big\{\psi\in L^2(\R^2_+) : D_x \psi\in L^2(\R^2_+)\,,(hD_y-x)\psi\in L^2(\R^2_+)\,,\\
\left(D^2_x+(hD_y-x)^2\right)\psi\in L^2(\R^2_+)\, \mbox{ and } \forall y\in\R\,, \partial_x\psi(0,y)=0 \Big\}\,.
\end{multline*}
This operator appears in many contexts. For instance, it plays a major role in the study of surface superconductivity, see \cite{FH10}. It also acquired a life of its own over the years, see \cite{Raymond}. The present paper is concerned with the semiclassical spectral analysis of the operator $\mathscr{L}_h$, that is to say that we mostly focus on spectral results for the operator $\mathscr{L}_h$ as $h\to 0$. It might seem surprising at first sight that $y$ is the only variable affected by the semiclassical parameter $h$ in this model, but this can be explained as follows. By performing the rescaling
\[x=h^{-\frac 12}\mathsf{x}\,,\quad y=h^{\frac 12}\mathsf{y}\,,\]
in \eqref{eq.0}, we notice that the operator $\mathscr{L}_h$ is unitarily equivalent to 
\[B^{-1}\left(D^2_{\mathsf{x}}+(D_{\mathsf{y}}-B\mathsf{x})^2\right)\,,\quad \mbox{where}\ B=h^{-1}\,,\]
which shows that the semiclassical limit corresponds to the \emph{large magnetic field regime}. Such partially semiclassical scaling has already been used by numerous authors in spectral theory of magnetic Schrödinger operators, as, e.g., in \cite{BHR16, BHR22} and in \cite{RR19}, where magnetic models \textit{à la} Iwatsuka (see \cite{I85}) are considered.
This paper is also fitted in the semiclassical framework in order to give a lighter formulation of the results and to use the power of the theory of pseudo-differential operators.

\subsection{A short bibliography} 
Magnetic quantum Hall currents have attracted a lot of attention from the mathematical community over the last decades, see, e.g., \cite{CHS02, DBP99, EJK01, FM03, FGW00, HS08b, HS08a, MMP99}, this list being non exhaustive. 
Quantum Hall devices can be modeled by a magnetic Laplacian describing the planar motion of an electron submitted to a constant transverse magnetic field. The electron is confined to unbounded regions of the plane by potential barriers or Dirichlet boundary conditions. The edge, be it the edge of the electrostatic confining potential or the Dirichlet boundary of the spatial domain, creates edge currents, whereas for Iwatsuka Hamiltonians (see \cite{I85}), quantum currents are generated by changes in the strength of the magnetic field, see, e.g., \cite{DHS14,HS15}. Some authors have also observed that the role of the "edge" could be played by a discontinuity of the magnetic field, see \cite{HPRS16} or \cite{AK20} where the dispersion curves are studied in detail.

In translationally invariant unbounded straight-edge geometries such as the half-plane or the infinite strip, the unperturbed quantum Hall Hamiltonian is fibered and edge currents occur at energies associated with fibers whose group velocity is non-zero, see \cite{CHS02, DBP99, FGW00, HS08b, HS08a, MMP99}. In this case, the existence of edge currents can be linked to the absolutely continuous spectral nature of the spectrum through the use of Mourre's positive commutator method, see, e.g. \cite{BHRS10, BRS08}, or also \cite{RR19} (where the absolute continuity is established by elementary means for Iwatsuka Hamiltonians). Similarly,  limiting absorption for the quantum Hall Hamiltonian in $\R_+^2$ at all energies except Landau levels, was derived in \cite{PS16} from the monotonicity of the dispersion relations.

Notice that quantum Hall currents in two-edge geometries such as the infinite strip, propagate in opposite directions
along the left and right edges of the band, respectively, see \cite{HS08a}. This can be understood from the fact that, unlike for the case of one-edge geometries, the dispersion functions of the corresponding quantum Hamiltonian are no longer monotonic, see \cite{HS08b, HS08a}. Moreover, these functions being ''symmetric'' about their minimum value, the net current flowing across any line orthogonal to the strip, is zero. 

This picture is quite reminiscent of the one of the present paper where none of the dispersion functions of the Neumann Laplacian in $\R_+^2$ is monotonic. Nevertheless, in contrast to the quantum Hall Hamiltonian in the band, the dispersion relations of the Neumann Laplacian are not symmetric about their minimum, which motivates for a closer look into its transport and spectral properties.

\subsection{Basic results on low energy currents}
In this section, we describe some elementary results about the low energy currents associated with the Hamiltonian $\mathscr{L}_h$. Due to the translational invariance of this operator in the $y$-direction, we shall see that its discrete spectrum is empty.

\subsubsection{Fiber decomposition}
Let us define the (unitary) semiclassical Fourier transform with respect to $y$ as
\[\mathscr{F}_h\varphi(x,\xi)=\frac{1}{\sqrt{2\pi h}}\int_{\R}e^{-i\xi y/h}\varphi(x,y)\dx y\,.\]
Then, $\mathscr{L}_h$ being invariant in the $y$-direction, we have the direct integral decomposition
\[\mathscr{F}_h\mathscr{L}_{h}\mathscr{F}_h^{-1}=\int_{\R}^\oplus \ell_{\xi}\,\dx\xi\,,\]
where for all $\xi \in \R$, $\ell_{\xi}$ is the Neumann realization in $L^2(\R_+)$ of the differential operator
\[D^2_{x}+(\xi-x)^2\,.\]
The operator $\ell_\xi$ is sometimes referred as the "de Gennes operator''. Since $x \mapsto (\xi-x)^2$ is unbounded as $x$ goes to infinity, $\ell_\xi$ has a compact resolvent and we denote by $\mu_{j}(\xi)$ its $j$-th eigenvalue, $j \in \N:=\{1,2,\ldots \}$. The well-known properties of this operator are recalled in the following proposition (see the original paper \cite{DH93}, and \cite[Section 3.2]{FH10}, \cite[Section 2.4]{Raymond}).
\begin{proposition}\label{prop.rappel}
	For all $j \in \N$, it holds true that:
	\begin{enumerate}[\rm i.]
		\item $\xi \mapsto \mu_{j}(\xi)$ is analytic on $\R$ and has a unique minimum denoted by $\xi_{j-1}$, which is non-degenerate.
		\item $\mu_{j}$ decreases on $(-\infty,\xi_{j-1})$ and increases on $(\xi_{j-1},+\infty)$. 
		\item $\Theta_{j-1}=\xi^2_{j-1}$, where $\Theta_{j-1}:=\mu_{j}(\xi_{j-1})$. 
		\item $\lim_{\xi\to+\infty}\mu_{j}(\xi)=2j-1$ and $\lim_{\xi\to-\infty}\mu_{j}(\xi)=+\infty$.
		\item $\Theta_{0}\in(0,1)$ and $\Theta_{j-1}\in\left(2j-3,2j-1\right)$ for $j\geq 2$.
	\end{enumerate}
\end{proposition}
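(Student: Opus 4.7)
My plan is built around the following key identity: for each $j\in\N$ and $\xi\in\R$,
\[\mu_j'(\xi)=\bigl(\xi^2-\mu_j(\xi)\bigr)u_j(0,\xi)^2,\qquad (\star)\]
where $u_j(\cdot,\xi)$ is a normalized real eigenfunction of $\ell_\xi$ for $\mu_j(\xi)$. To derive $(\star)$, I would multiply $-u_j''+(\xi-x)^2u_j=\mu_j u_j$ by $u_j'$, integrate over $\R_+$, and use the Neumann condition $u_j'(0)=0$ together with one integration by parts; combined with the Feynman--Hellmann identity $\mu_j'(\xi)=2\int_0^\infty(\xi-x)u_j^2\dx x$, this produces $(\star)$. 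Analyticity in (i) follows because $\xi\mapsto\ell_\xi$ is a polynomial family of self-adjoint operators on a fixed operator domain, hence a Kato analytic family of type (A), with simple eigenvalues by Sturm--Liouville theory. A point I would use repeatedly is that $u_j(0,\xi)\neq 0$: otherwise $u_j(0)=u_j'(0)=0$ forces $u_j\equiv 0$ by ODE uniqueness.

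Strict monotonicity (ii) and the relation (iii) then fall out of $(\star)$. For $\xi\leq 0$ the pointwise inequality $(\xi-x)^2\geq\xi^2$ on $\R_+$ (strict off $x=0$) gives $\mu_j(\xi)>\xi^2$, so $\mu_j'(\xi)<0$ and every critical point of $\mu_j$ lies in $(0,+\infty)$. On $(0,+\infty)$ I introduce $f(\xi):=\xi^2-\mu_j(\xi)$; at any critical point $\xi_*$ of $\mu_j$, $f(\xi_*)=0$ and $f'(\xi_*)=2\xi_*-\mu_j'(\xi_*)=2\xi_*>0$, so every zero of $f$ is simple and $f$ can only flip from $-$ to $+$. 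This rules out two successive zeros, yielding uniqueness of the critical point; existence follows from $\mu_j\to+\infty$ at $-\infty$ and the finite limit at $+\infty$ given by (iv). Calling this unique point $\xi_{j-1}$ proves (ii), and (iii) is the restatement $f(\xi_{j-1})=0$. Differentiating $(\star)$ at $\xi_{j-1}$ gives $\mu_j''(\xi_{j-1})=2\xi_{j-1}u_j(0,\xi_{j-1})^2>0$, completing the non-degeneracy claim of (i). For (iv), the limit at $-\infty$ is immediate from $\mu_j(\xi)\geq\xi^2$; at $+\infty$, the translation $x\mapsto x+\xi$ conjugates $\ell_\xi$ to $D_x^2+x^2$ on $(-\xi,+\infty)$ with Neumann boundary at $-\xi$, which converges in strong resolvent sense to the full-line harmonic oscillator, yielding the Landau limit $2j-1$ via standard min-max bounds and truncated Hermite quasi-modes.

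For (v), the positivity $\Theta_0>0$ follows from $\ell_\xi\geq 0$ together with the impossibility of an $L^2$-eigenfunction supported where the positive potential vanishes, and the upper bound $\Theta_{j-1}<2j-1$ is immediate from (ii) combined with the limit $2j-1$ at $+\infty$. \emph{The main obstacle} is the strict lower bound $\Theta_{j-1}>2j-3$ for $j\geq 2$. My plan is to reflect $\ell_\xi$ into the operator $D_x^2+(|x|-\xi)^2$ on $\R$, whose even (resp.\ odd) eigenfunctions recover the Neumann (resp.\ Dirichlet) spectra of $\ell_\xi$ on $\R_+$, and then compare this reflected operator with the whole-line Landau Hamiltonian $D_x^2+(x-\xi)^2$ whose spectrum is $\{2k-1:\,k\in\N\}$. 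Combining this comparison with a Sturm--Liouville zero-counting argument for the $j$-th Neumann eigenfunction and the strict monotonicity already established, one sandwiches $\Theta_{j-1}$ between two consecutive Landau levels. This step genuinely interlocks the Neumann and Dirichlet spectra and is where the structure of $\ell_\xi$ is exploited most delicately; it is also the only place where an induction on $j$ (via the bounds already known for $\mu_{j-1}$) seems unavoidable.
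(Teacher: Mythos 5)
The paper does not itself prove Proposition~\ref{prop.rappel}: it merely recalls these facts and refers to \cite{DH93}, \cite[Section 3.2]{FH10} and \cite[Section 2.4]{Raymond}. Your identity $(\star)$, namely $\mu_j'(\xi)=(\xi^2-\mu_j(\xi))\,u_j(0,\xi)^2$, is precisely the Dauge--Helffer relation that underlies those references, and your handling of (i)--(iv) is correct and essentially the standard one, up to two details you should make explicit. First, the existence of the critical point does not follow merely from $\mu_j\to+\infty$ at $-\infty$ and $\mu_j\to 2j-1$ at $+\infty$ (a function decreasing from $+\infty$ to $2j-1$ has no critical point); you should close the loop with $(\star)$: if $\mu_j'<0$ everywhere then $\mu_j(\xi)>\xi^2$ for all $\xi$, which contradicts the finite limit at $+\infty$, so $\mu_j'$ vanishes somewhere. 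Second, when differentiating $(\star)$ at $\xi_{j-1}$ you must remark that the extra term $(\xi_{j-1}^2-\Theta_{j-1})\,\partial_\xi\!\left[u_j(0,\xi)^2\right]$ vanishes by (iii) before concluding $\mu_j''(\xi_{j-1})=2\xi_{j-1}u_j(0,\xi_{j-1})^2>0$.

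The genuine gap is the lower bound $\Theta_{j-1}>2j-3$ in (v), for which you offer only a programme, and the comparison you sketch would in fact run in the wrong direction: for $\xi>0$ one has $(|x|-\xi)^2\leq(x-\xi)^2$ pointwise on $\R$, so comparing the reflected operator $\hat\ell_\xi:=D_x^2+(|x|-\xi)^2$ with the Landau Hamiltonian $D_x^2+(x-\xi)^2$ yields \emph{upper} bounds on its eigenvalues, not lower ones. The argument does close, and without the induction you feared, but the relevant lower bound lives at the Dirichlet level. Let $\nu_j(\xi)$ denote the eigenvalues of the Dirichlet realization of $D_x^2+(\xi-x)^2$ on $\R_+$. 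Extending Dirichlet test functions by zero, $H^1_0(\R_+)\hookrightarrow H^1(\R)$, identifies the Dirichlet form with the restriction of the whole-line Landau form to a closed subspace, so the min-max principle gives $\nu_j(\xi)\geq 2j-1$ for every $\xi\in\R$. On the other hand $\hat\ell_\xi$ commutes with parity and is a one-dimensional Schrödinger operator with a confining potential, so its $k$-th eigenfunction has exactly $k-1$ zeros; counting zeros of the even and odd extensions of the Neumann and Dirichlet half-line eigenfunctions then shows that its spectrum is the strictly interlaced sequence $\mu_1<\nu_1<\mu_2<\nu_2<\cdots$. Combining the two facts gives $\mu_j(\xi)>\nu_{j-1}(\xi)\geq 2(j-1)-1=2j-3$ for all $\xi$, hence $\Theta_{j-1}>2j-3$. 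This is exactly the Neumann/Dirichlet interlocking you had in mind, but it is a direct argument once the interlacing and the Dirichlet comparison are stated; neither induction on $j$ nor a comparison of $\hat\ell_\xi$ with the Landau Hamiltonian is needed.
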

Since the $\mu_j$'s are non-constant by Proposition \ref{prop.rappel}.iv, the spectrum of $\mathscr{L}_h$ is purely absolutely continuous (see \cite{I85}). Moreover we have 
\[\mathsf{sp}(\mathscr{L}_h)=[\Theta_0,+\infty)\,,\]
see \cite[Theorem XIII.85]{RS78}.

\subsubsection{Basic properties of low energy magnetic quantum currents}

Let us define the current operator in the direction $y$ by
\[\mathscr{J}_h:=[\mathscr{L}_{h},iy]=2h(hD_{y}-x)\,,\]
and the current operator with energy concentration in $I$ by
\[\mathscr{J}_{h,I}:=\mathds{1}_{I}(\mathscr{L}_{h}) \mathscr{J}_{h}\mathds{1}_{I}(\mathscr{L}_{h})\,.\]
The operator $\mathscr{J}_{h,I}$ is bounded on $E_{h,I}:=\mathsf{range}(\mathds{1}_{I}(\mathscr{L}_{h})) $ and symmetric.

Put
\[\lambda_{\min}(h,I):=\inf\mathsf{sp}\left(\mathscr{J}_{h,I}\right)\,,\quad\lambda_{\max}(h,I):=\sup\mathsf{sp}\left(\mathscr{J}_{h,I}\right).\]
We recall from the min-max principle that
\[\lambda_{\min}(h,I)=\inf_{   \substack{\psi\in E_{h,I} \\ \psi\neq 0}    }\frac{\langle\mathscr{J}_{h,I}\psi,\psi\rangle}{\|\psi\|^2_{L^2(\R^2_{+})}}\,,\quad\lambda_{\max}(h,I)=\sup_{   \substack{\psi\in E_{h,I} \\ \psi\neq 0}    }\frac{\langle\mathscr{J}_{h,I}\psi,\psi\rangle}{\|\psi\|^2_{L^2(\R^2_{+})}}\,.\]
Then, the spectral radius of $\mathscr{J}_{h,I}$, given by
\[\rho\left(\mathscr{J}_{h,I}\right)=\max\left(|\lambda_{\min}(h,I)|,|\lambda_{\max}(h,I)|\right)\,,\]
is the maximal strength of the (absolute value of the) current carried by the energy window $I$.

\begin{definition}\label{defi.semi-current}
	Let $e\in[\Theta_{0}, \Theta_{1})$ and consider the equation
	\begin{equation}\label{eq.nu1e}
	\mu_{1}(\xi)=e\,.
	\end{equation}
	\begin{enumerate}[\rm i.]
		\item When $e\in(\Theta_{0},1)$, the equation \eqref{eq.nu1e} has two solutions $\mu^{-1}_{-}(e)<\xi_{0}<\mu^{-1}_{+}(e)$.
		\item When $e=\Theta_{0}$, the equation \eqref{eq.nu1e} has only one solution $\mu^{-1}_{\pm}(\Theta_0)=\xi_{0}$.
		\item When $e\in(1,\Theta_{1})$,  the equation \eqref{eq.nu1e} has one solution $\mu^{-1}_-(e)<\xi_{0}$ and, by convention, we set $\mu^{-1}_+(e)=+\infty$.
	\end{enumerate}
	For all $e\in[\Theta_{0}, \Theta_{1})$, we define the \emph{algebraic current} as
	\begin{equation}\label{eq.j}
	c(e)=\mu'_{1}(\mu^{-1}_-(e))+\mu'_{1}(\mu^{-1}_+(e))\,,
	\end{equation}
	where we have set $\mu'_{1}(+\infty)=0$.
	
\end{definition}

In the sequel, it will be assumed at some stage of the analysis that the following conjecture (which is supported by numerical simulations carried out by M. P. Sundqvist with Mathematica) is verified.

\begin{conjecture}\label{conj.der-3}
	We have $\mu_{1}^{(3)}(\xi_{0})<0$.
\end{conjecture}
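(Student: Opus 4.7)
The plan is to reduce the conjecture to an explicit inequality involving the ground state $u_0 := u_{\xi_0}$ of $\ell_{\xi_0}$ and its derivative in $\xi$, namely $\dot u_0 := \partial_\xi u_\xi \vert_{\xi_0}$, and then to attempt to settle the sign either via an analysis of Weber parabolic cylinder functions or via validated numerics.

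Let $u_\xi$ denote the positive $L^2(\R_+)$-normalised ground state of $\ell_\xi$, chosen so that $\dot u_\xi$ is orthogonal to $u_\xi$ for every $\xi$. Differentiating the eigenvalue equation $(\ell_\xi - \mu_1(\xi)) u_\xi = 0$ at $\xi_0$ and using $\mu_1'(\xi_0) = 0$ from Proposition \ref{prop.rappel} yields the inhomogeneous problem
\[
(\ell_{\xi_0} - \Theta_0)\, \dot u_0 = 2(x - \xi_0) u_0\,,\quad \dot u_0'(0) = 0\,,\quad \langle \dot u_0, u_0\rangle = 0\,.
\]
Since $\partial_\xi \ell_\xi = 2(\xi - x)$, $\partial_\xi^2 \ell_\xi = 2$, $\partial_\xi^3 \ell_\xi = 0$, the third-order Rayleigh--Schr\"odinger formula collapses, after standard manipulations involving the reduced resolvent of $\ell_{\xi_0}$ on $\{u_0\}^\perp$, to
\[
\mu_1^{(3)}(\xi_0) = 12 \int_0^{+\infty} (\xi_0 - x)\, |\dot u_0(x)|^2 \,\dx x\,.
\]
Thus the conjecture is equivalent to the statement that the barycentre of $|\dot u_0|^2$ lies strictly to the right of $\xi_0$, while by $\mu_1'(\xi_0) = 0$ the barycentre of $|u_0|^2$ equals $\xi_0$. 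As a consistency check, in the analogue of $\ell_\xi$ without the Neumann wall (the shifted harmonic oscillator on $\R$), $\dot u_\xi$ is odd about $\xi$ and the above integral vanishes identically, matching the fact that $\mu_1$ is then constant.

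A complementary, more elementary expression can be produced via a boundary identity. Multiplying $-u_\xi'' + (\xi-x)^2 u_\xi = \mu_1(\xi) u_\xi$ by $u_\xi'$, integrating on $\R_+$, and using the Neumann condition together with the exponential decay at infinity yields
\[
\mu_1'(\xi) = \bigl(\xi^2 - \mu_1(\xi)\bigr) u_\xi(0)^2\,.
\]
Differentiating twice at $\xi_0$ successively produces $\mu_1''(\xi_0) = 2\xi_0 u_0(0)^2$ (which is consistent with the non-degeneracy of the minimum in Proposition \ref{prop.rappel}) and
\[
\mu_1^{(3)}(\xi_0) = 2 u_0(0)^2 \bigl(1 - \xi_0 u_0(0)^2\bigr) + 8 \xi_0 u_0(0)\, \dot u_0(0)\,,
\]
so that the conjecture becomes an explicit scalar inequality relating the two boundary values $u_0(0)$ and $\dot u_0(0)$.

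From here two non-exclusive routes suggest themselves. On the one hand, $u_\xi$ admits an explicit representation in terms of the Weber parabolic cylinder function $U\bigl(-\mu_1(\xi)/2, \sqrt{2}(x-\xi)\bigr)$: the Neumann condition at $x=0$ determines $\mu_1(\xi)$ implicitly, $\dot u_0(0)$ can be extracted by differentiating this implicit relation, and the required inequality becomes a concrete statement about $U$ and its first derivatives at $(-\Theta_0/2, -\sqrt{2}\,\xi_0)$, which one may hope to settle via the standard recursion identities for $U$. On the other hand, since the numerical margin reported by Sundqvist is not infinitesimal, a computer-assisted proof based on a Galerkin truncation of $u_0$ in a Hermite (or shifted-Hermite) basis, together with explicit residual and tail estimates, ought to certify the sign in either of the two displays. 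The main obstacle is structural: the negativity of $\mu_1^{(3)}(\xi_0)$ genuinely encodes the asymmetry created by the Neumann wall, so any purely \emph{soft} convexity, positivity, or commutator argument is unlikely to succeed without feeding in quantitatively the specific effect of the boundary condition at $x=0$.
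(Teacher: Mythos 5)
The statement you are addressing is left as an open \emph{conjecture} in the paper, supported only by the numerical simulations credited to Sundqvist, so there is no proof in the source to compare your attempt against. Your proposal does not close the gap either: what you produce are two correct reformulations of $\mu_1^{(3)}(\xi_0)$ together with a candid sketch of possible routes, but no argument that actually pins down the sign.

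Both reformulations check out. The third-order Rayleigh--Schr\"odinger computation, exploiting $\partial_\xi^3\ell_\xi=0$, $\partial_\xi^2\ell_\xi=2$, the gauge $\langle\dot u_0,u_0\rangle=0$ and $\mu_1'(\xi_0)=0$, does collapse to $\mu_1^{(3)}(\xi_0)=6\langle\partial_\xi\ell_{\xi_0}\,\dot u_0,\dot u_0\rangle=12\int_0^{+\infty}(\xi_0-x)\,|\dot u_0(x)|^2\,\dx x$, and your full-line sanity check (where $\dot u_\xi$ is odd about $\xi$, so the integral vanishes and $\mu_1$ is constant) is apt. The boundary identity $\mu_1'(\xi)=(\xi^2-\mu_1(\xi))\,u_\xi(0)^2$, obtained by multiplying by $u_\xi'$ and integrating on $\R_+$ with the Neumann condition, is the classical Dauge--Helffer identity; differentiating it twice at $\xi_0$, using $\xi_0^2=\Theta_0$ from Proposition \ref{prop.rappel}.iii, correctly yields your scalar expression $2u_0(0)^2(1-\xi_0 u_0(0)^2)+8\xi_0 u_0(0)\dot u_0(0)$. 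These are genuine contributions: they turn the conjecture into a concrete inequality about two boundary values, or equivalently about the location of the barycentre of $|\dot u_0|^2$ relative to $\xi_0$. The remaining step, however --- certifying that this barycentre sits strictly to the right of $\xi_0$, or that the boundary combination is negative --- requires quantitative input that your proposal only gestures at, whether through validated numerics with explicit residual and tail bounds or through hard estimates on the parabolic cylinder function $U$ and its derivatives near $(-\Theta_0/2,-\sqrt{2}\,\xi_0)$. As you rightly observe, a purely soft positivity or commutator argument cannot succeed because the sign is precisely the imprint of the Neumann wall's asymmetry. The conjecture therefore remains open; your reformulations are progress towards it, not a proof of it.
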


\begin{lemma}\label{lem.current-dir}
	Assume that Conjecture \ref{conj.der-3} is true. Then, there exists $e_{\star}\in(\Theta_{0},1)$ such that for all $e\in(\Theta_{0},e_{\star})$, we have $c(e)<0$.
	
\end{lemma}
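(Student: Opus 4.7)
The plan is to perform a local Taylor expansion of the algebraic current $c(e)$ as $e \to \Theta_0^+$, and to read off the sign of the leading-order coefficient from Proposition \ref{prop.rappel} (which gives $\mu_1''(\xi_0)>0$ at the non-degenerate minimum) together with Conjecture \ref{conj.der-3} (which gives $\mu_1^{(3)}(\xi_0)<0$). Since $\mu_1'(\xi_0)=0$, one already has $c(\Theta_0)=2\mu_1'(\xi_0)=0$, so the goal is to identify the first non-vanishing term of $c$ at $e=\Theta_0$.

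First, I parametrize the two branches $\xi_\pm(e):=\mu_\pm^{-1}(e)$ of $\mu_1(\xi)=e$ near $\xi_0$. Because $\mu_1$ is real analytic with a non-degenerate minimum at $\xi_0$, there exists an analytic function $g$ on a neighbourhood of $\xi_0$ with
$$\mu_1(\xi)-\Theta_0=g(\xi)^2,\quad g(\xi_0)=0,\quad g'(\xi_0)=\sqrt{\mu_1''(\xi_0)/2}>0.$$
The analytic inverse function theorem then yields an analytic local inverse $h=g^{-1}$ near $0$, and setting $u:=\sqrt{e-\Theta_0}\geq 0$ one has $\xi_\pm(e)=\xi_0+h(\pm u)$. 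Defining $F(u):=\mu_1'(\xi_0+h(u))$, which is analytic near $0$ with $F(0)=\mu_1'(\xi_0)=0$, the algebraic current reads
$$c(e)=F(u)+F(-u)=F''(0)\,u^2+O(u^4)=F''(0)(e-\Theta_0)+O\bigl((e-\Theta_0)^2\bigr),$$
the odd powers of $u$ dropping out of $F(u)+F(-u)$.

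The next step is to compute $F''(0)$ explicitly. Differentiating $g(h(u))=u$ twice at $u=0$ gives $h'(0)=1/g'(\xi_0)$ and $h''(0)=-g''(\xi_0)/g'(\xi_0)^3$. On the other hand, successive differentiation of $g(\xi)^2=\mu_1(\xi)-\Theta_0$ at $\xi_0$ yields
$$2\,g'(\xi_0)^2=\mu_1''(\xi_0),\qquad 6\,g'(\xi_0)\,g''(\xi_0)=\mu_1^{(3)}(\xi_0).$$
Substituting into $F''(0)=\mu_1^{(3)}(\xi_0)\,h'(0)^2+\mu_1''(\xi_0)\,h''(0)$ and simplifying, I obtain
$$F''(0)=\frac{4\,\mu_1^{(3)}(\xi_0)}{3\,\mu_1''(\xi_0)}.$$
By Proposition \ref{prop.rappel} the denominator is positive, and by Conjecture \ref{conj.der-3} the numerator is negative, so $F''(0)<0$. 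The existence of $e_\star\in(\Theta_0,1)$ such that $c(e)<0$ on $(\Theta_0,e_\star)$ then follows immediately from the continuous expansion above.

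The main obstacle is the correct handling of the branching of $\xi_\pm(e)$ at $e=\Theta_0$, where they coalesce at $\xi_0$; this is resolved by the Morse-type desingularization $u=\sqrt{e-\Theta_0}$, after which the rest is routine Taylor bookkeeping. One harmless alternative, avoiding the explicit introduction of $g$, would be to insert the Puiseux expansions $\xi_\pm(e)=\xi_0\pm\eta-\frac{\mu_1^{(3)}(\xi_0)}{6\mu_1''(\xi_0)}\eta^2+O(\eta^3)$ with $\eta=\sqrt{2(e-\Theta_0)/\mu_1''(\xi_0)}$ directly into \eqref{eq.j} and expand $\mu_1'(\xi_\pm(e))$ to second order; this leads to the same leading coefficient and concludes in the same way.
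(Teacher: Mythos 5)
Your proof is correct and rests on the same idea as the paper's: expand $\mu_1$ around its non-degenerate minimum, track the two solution branches of $\mu_1(\xi)=e$ as $e\downarrow\Theta_0$, and read off the sign of the leading term of $c(e)$ from $\mu_1''(\xi_0)>0$ and $\mu_1^{(3)}(\xi_0)<0$. Where the paper inserts a two-term Puiseux expansion $p_\pm(\eps)=\pm\sqrt{2\eps/\mu_1''(\xi_0)}+u_\pm(\eps)$ directly into $\mu_1'$ and adds, you first apply the analytic Morse lemma to write $\mu_1-\Theta_0=g^2$ and work with the analytic composite $F(u)=\mu_1'(\xi_0+g^{-1}(u))$, then invoke the symmetry $c(e)=F(u)+F(-u)$ to kill the odd orders automatically. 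This is a somewhat cleaner bookkeeping device — the vanishing of $F'(0)$ and of the odd-order terms comes for free from parity rather than from a cancellation one has to observe by hand — but it is a technical repackaging of the same expansion, not a different argument.

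One point worth flagging: your leading coefficient $F''(0)=\frac{4\,\mu_1^{(3)}(\xi_0)}{3\,\mu_1''(\xi_0)}$ disagrees with the paper's value $\frac{5\,\mu_1^{(3)}(\xi_0)}{6\,\mu_1''(\xi_0)}$. Redoing the computation from scratch (set $A=\mu_1''(\xi_0)$, $B=\mu_1^{(3)}(\xi_0)$, and solve $\frac{A}{2}p^2+\frac{B}{6}p^3=\eps$ to get $p_\pm=\pm\sqrt{2\eps/A}-\frac{B}{3A^2}\eps+O(\eps^{3/2})$, then sum $\mu_1'(\xi_0+p_\pm)=Ap_\pm+\frac{B}{2}p_\pm^2+\cdots$), one finds $c(\Theta_0+\eps)=\frac{4B}{3A}\eps+o(\eps)$, which matches your $F''(0)$; the paper's intermediate expansion $u_\pm(\eps)=-\frac{B}{6A}\eps+o(\eps)$ appears to be missing a factor $2/A$, which propagates to its final coefficient. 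Both constants are negative under Conjecture \ref{conj.der-3}, so the qualitative conclusion of the lemma is unaffected, but your value is the correct one.
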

For $e\in[\Theta_{0},\Theta_{1})$, we set
\[I_{\delta}=[e-\delta,e+\delta]\,.\]

\begin{proposition}\label{prop.current-dir}
	Assume that $\R_{+}\ni h\mapsto \delta(h)$ tends to $0$ as $h \downarrow 0$. 
	Then, we have
	\[ h^{-1}\lambda_{\min}(I_{\delta},h)=\mu'_{1}(\mu^{-1}_-(e))+\mathcal{O}(\delta(h))\,,\quad  h^{-1}\lambda_{\max}(I_{\delta},h)=\mu'_{1}(\mu^{-1}_+(e))+\mathcal{O}(\delta(h))\,.\]
	In particular, the spectral radius reads
	\[\rho\left( h^{-1}\mathscr{J}_{I_{\delta},h}\right)=\max\left(|\mu'_{1}(\mu^{-1}_-(e))|,|\mu'_{1}(\mu^{-1}_+(e))|\right)+o(1)\,.\]
	Assume in addition that Conjecture \ref{conj.der-3} holds. Then, for all $e\in(\Theta_{0},e_{\star})$ the current of maximal strength is negative provided $h$ is sufficiently small, i.e., we have $\rho\left(\mathcal{J}_{I_{\delta},h}\right)=|\lambda_{\min}(I_\delta,h)|$.
\end{proposition}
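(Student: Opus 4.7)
The plan is to diagonalise $\mathscr{J}_{h,I_\delta}$ through the fiber decomposition of $\mathscr{L}_h$, to identify the resulting multiplier as $h\mu'_1(\xi)$ via the Hellmann--Feynman formula, and then to read off the asymptotics from a Taylor expansion near $\mu^{-1}_\pm(e)$. First I would apply $\mathscr{F}_h$ and use that $\mathscr{J}_h=2h(hD_y-x)$ has fiber $2h(\xi-x)$ to obtain
\[ \mathscr{F}_h\mathscr{J}_{h,I_\delta}\mathscr{F}_h^{-1}=\int_\R^\oplus \mathds{1}_{I_\delta}(\ell_\xi)\,2h(\xi-x)\,\mathds{1}_{I_\delta}(\ell_\xi)\,\dx\xi. \]
For $\delta$ small enough that $e+\delta<\Theta_1$, the uniform gap $\mu_2(\xi)\geq\Theta_1>e+\delta$ granted by Proposition \ref{prop.rappel}(v) forces $\mathds{1}_{I_\delta}(\ell_\xi)=\mathds{1}_{S_\delta}(\xi)\Pi_\xi$, where $\Pi_\xi=|u_1(\xi)\rangle\langle u_1(\xi)|$ is the projector onto the normalised ground state of $\ell_\xi$ and $S_\delta:=\{\xi\in\R:\mu_1(\xi)\in I_\delta\}$. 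Since $\partial_\xi\ell_\xi=2(\xi-x)$, the Hellmann--Feynman identity $\mu'_1(\xi)=2\langle(\xi-x)u_1(\xi),u_1(\xi)\rangle$ then gives $\Pi_\xi\cdot 2h(\xi-x)\cdot\Pi_\xi=h\mu'_1(\xi)\Pi_\xi$, so that $\mathscr{J}_{h,I_\delta}$ becomes unitarily equivalent to multiplication by the continuous function $h\mu'_1$ on the fibered subspace $\int_{S_\delta}^\oplus \Pi_\xi L^2(\R_+)\,\dx\xi$, completed by the zero action on the orthogonal complement. By the min--max principle one reads off
\[ h^{-1}\lambda_{\min}(I_\delta,h)=\min_{\overline{S_\delta}}\mu'_1,\qquad h^{-1}\lambda_{\max}(I_\delta,h)=\max_{\overline{S_\delta}}\mu'_1, \]
with the conventions of Definition \ref{defi.semi-current} (notably $\mu'_1(+\infty)=0$) covering the cases when one of the branches of $\mu^{-1}$ is absent.

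The extremum asymptotics then follow from a Taylor expansion. For $e\in(\Theta_0,1)$, Proposition \ref{prop.rappel}(ii) together with the non-vanishing of $\mu'_1$ at $\mu^{-1}_\pm(e)$ implies that $S_\delta$ is the disjoint union of two intervals of width $\mathcal{O}(\delta)$ centred at $\mu^{-1}_\pm(e)$, on each of which $\mu'_1(\xi)=\mu'_1(\mu^{-1}_\pm(e))+\mathcal{O}(\delta)$; combined with the signs $\mu'_1(\mu^{-1}_-(e))<0<\mu'_1(\mu^{-1}_+(e))$ (see Proposition \ref{prop.rappel}(ii)), this identifies the minimum with the left branch and the maximum with the right one. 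The boundary case $e=\Theta_0$ yields a single interval of width $\mathcal{O}(\delta^{1/2})$ around $\xi_0$ on which $|\mu'_1|=\mathcal{O}(\delta^{1/2})$, consistently with $\mu'_1(\xi_0)=0$; the case $e\in(1,\Theta_1)$ leaves only the left branch, the vanishing right branch being matched by the convention $\mu'_1(+\infty)=0$. The spectral-radius formula then follows from $\rho=\max(|\lambda_{\min}|,|\lambda_{\max}|)$.

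Finally, Lemma \ref{lem.current-dir} under Conjecture \ref{conj.der-3} furnishes $e_\star\in(\Theta_0,1)$ such that $c(e)<0$ for all $e\in(\Theta_0,e_\star)$, equivalently $|\mu'_1(\mu^{-1}_-(e))|>\mu'_1(\mu^{-1}_+(e))$. Being a strict inequality between $h$-independent quantities, it survives the $o(1)$ error in the spectral-radius expansion for $h$ small enough, giving $\rho(\mathscr{J}_{h,I_\delta})=|\lambda_{\min}(I_\delta,h)|$. The whole argument is essentially a direct computation; the only step requiring some care is the verification of the uniform rank-one character of $\mathds{1}_{I_\delta}(\ell_\xi)$, which rests on the spectral gap above the first band provided by Proposition \ref{prop.rappel}(v).
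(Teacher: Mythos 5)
Your proof is correct and follows the same strategy as the paper's. Both arguments start from the fiber decomposition, use the spectral gap $\mu_2(\xi)\geq\Theta_1$ to reduce $\mathds{1}_{I_\delta}(\ell_\xi)$ to the rank-one projector onto $u_1(\cdot,\xi)$, apply the Feynman--Hellmann identity to turn $2h(\xi-x)$ into $h\mu_1'(\xi)$, and finish with a Taylor expansion near $\mu_\pm^{-1}(e)$. Your Lemma-\ref{lem.spec-loc}/Lemma-\ref{lem.pars} content is exactly the paper's, just rephrased.

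The one genuine (though minor) difference is in how the extreme eigenvalues are extracted. The paper writes the quadratic form $\langle\mathscr{J}_{I_\delta,h}\psi,\psi\rangle=h\int\mathds{1}_{I_\delta}(\mu_1(\xi))\mu_1'(\xi)|\langle\mathscr{F}_h\psi,u_1\rangle|^2\,\dx\xi$, obtains one-sided bounds by the min-max principle, and then proves the reverse inequalities by exhibiting concentrated trial states $\psi_h$. You instead observe that, after Fourier conjugation and projection, $\mathscr{J}_{h,I_\delta}$ restricted to $E_{h,I_\delta}$ is unitarily equivalent to multiplication by the continuous function $h\mu_1'$ on $L^2(S_\delta)$, whose spectrum is exactly $h\mu_1'(\overline{S_\delta})$; this gives the exact identities $h^{-1}\lambda_{\min}=\min_{\overline{S_\delta}}\mu_1'$ and $h^{-1}\lambda_{\max}=\max_{\overline{S_\delta}}\mu_1'$ at once, so the trial-state construction is not needed. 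This is a cleaner route to the same asymptotics and is worth noting. You are also more careful than the statement itself at the boundary cases: as you observe, for $e=\Theta_0$ the width of $S_\delta$ is $\mathcal{O}(\delta^{1/2})$, so the error is really $\mathcal{O}(\delta^{1/2})$ rather than $\mathcal{O}(\delta)$ there (this does not affect the $o(1)$ spectral-radius conclusion); and for $e\in(1,\Theta_1)$, with the right branch absent, the exact formula gives $h^{-1}\lambda_{\max}=\mu_1'(\mu_-^{-1}(e))+\mathcal{O}(\delta)$ rather than $\mathcal{O}(\delta)$, which shows that in that regime the stated equality for $\lambda_{\max}$ should be read only as an upper bound (consistent with the paper's own proof, which only derives the one-sided inequality). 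For the main regime $e\in(\Theta_0,1)$, where the proposition is actually used, everything matches.
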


Moreover, the low energy states are localized in the vicinity of the Neumann boundary $x=0$, as can be seen from the following Agmon type estimate.

\begin{proposition}\label{prop.Agmon}
Put
	\[J_e:=(-\infty,e)\,,\quad e<1\,.\]
Then, 
for all $h>0$, all $K>0$ and all $\psi\in E_{J_{e},h}$, we have
	\[\int_{\R_{+}^2}e^{K x}|\psi(x,y)|^2\dx x\dx y\leq C\|\psi\|^2_{L^2(\R^2_{+})}\,,\]
	for some constant $C>0$ depending only on $e$ and $K$.
\end{proposition}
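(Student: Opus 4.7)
The plan is to exploit the fiber decomposition $\mathscr{F}_h\mathscr{L}_h\mathscr{F}_h^{-1}=\int^\oplus \ell_\xi\,\dx\xi$ in order to reduce the weighted $L^2$ estimate to a uniform one-dimensional Agmon bound on the ground states of the de Gennes operators $\ell_\xi$. The key point is that since $e<1$, only the first band contributes: item v of Proposition~\ref{prop.rappel} yields $\mu_j(\xi)\geq \Theta_{j-1}>1$ for all $\xi\in\R$ and all $j\geq 2$, so by item iv the set $\mathcal{I}:=\{\xi\in\R:\mu_1(\xi)\leq e\}$ is a bounded interval. Every $\psi\in E_{J_e,h}$ thus satisfies
\[\mathscr{F}_h\psi(x,\xi)=\hat f(\xi)\,u_1(x,\xi)\,\mathds{1}_{\mathcal{I}}(\xi),\]
where $u_1(\cdot,\xi)$ denotes the real normalised ground state of $\ell_\xi$ and $\|\hat f\|_{L^2(\mathcal{I})}=\|\psi\|_{L^2(\R^2_+)}$ by Parseval.

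Applying Plancherel in the $y$-variable yields
\[\int_{\R^2_+}e^{Kx}|\psi|^2\,\dx x\,\dx y = \int_{\mathcal{I}}|\hat f(\xi)|^2\left(\int_0^{+\infty}e^{Kx}|u_1(x,\xi)|^2\,\dx x\right)\dx\xi,\]
so it suffices to prove the uniform bound $C_K:=\sup_{\xi\in\mathcal{I}}\int_0^{+\infty}e^{Kx}|u_1(x,\xi)|^2\,\dx x<+\infty$. The natural tool is an Agmon identity for $\ell_\xi$: testing the eigenequation $\ell_\xi u_1=\mu_1(\xi)u_1$ against $e^{2\phi_R}u_1$, where $\phi_R$ is a smooth bounded Lipschitz approximation of $x\mapsto Kx/2$ (truncated at $x=R$, so that $|\phi_R'|\leq K/2$), and integrating by parts twice, one obtains---using crucially the Neumann condition $u_1'(0,\xi)=0$ to make the boundary contributions at $x=0$ cancel---the inequality
\[\int_0^{+\infty}(\xi-x)^2\, e^{2\phi_R}|u_1|^2\,\dx x \leq \left(\mu_1(\xi)+\tfrac{K^2}{4}\right)\int_0^{+\infty}e^{2\phi_R}|u_1|^2\,\dx x.\]

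Since $\mathcal{I}$ is bounded and $\mu_1(\xi)\leq e<1$ on $\mathcal{I}$, one may pick $M>0$ depending only on $e$ and $K$ (and independent of $R$ and $\xi\in\mathcal{I}$) such that $(\xi-x)^2\geq 2\left(\mu_1(\xi)+K^2/4\right)$ for all $x\geq M$ and all $\xi\in\mathcal{I}$. Splitting the above integral at $x=M$ and absorbing the exterior piece into the left-hand side gives $\int_M^{+\infty}e^{2\phi_R}|u_1|^2\,\dx x \leq \int_0^M e^{2\phi_R}|u_1|^2\,\dx x \leq e^{KM}$, the last step using $\|u_1\|=1$ and $e^{2\phi_R}\leq e^{Kx}\leq e^{KM}$ on $[0,M]$. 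Hence $\int_0^{+\infty}e^{2\phi_R}|u_1|^2\,\dx x\leq 2e^{KM}$ uniformly in $R$ and $\xi\in\mathcal{I}$, and monotone convergence as $R\to+\infty$ yields $C_K\leq 2e^{KM}$. The only delicate points are the rigorous justification of the Agmon identity with a bounded weight (a standard approximation argument) and the $\xi$-uniformity of the constants, both of which are immediate from the compactness of $\mathcal{I}$ and the analytic dependence of $u_1(\cdot,\xi)$ on $\xi$.
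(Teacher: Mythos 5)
Your proposal is correct and follows essentially the same route as the paper: fiber decomposition, the observation that only the first band contributes (so $\{\mu_1\leq e\}$ is a bounded interval), an Agmon/localization identity on each fiber with a bounded Lipschitz weight, $\xi$-uniform constants from the boundedness of that interval and $\mu_1<1$, integration over $\xi$ via Plancherel, and a limiting argument ($R\to\infty$ in your notation, $n\to\infty$ with the paper's $\Phi_n$). The only cosmetic differences are that you factor $\mathscr{F}_h\psi(\cdot,\xi)=\hat f(\xi)u_1(\cdot,\xi)$ explicitly and run Agmon on the normalized ground state $u_1$ rather than on $\mathscr{F}_h\psi(\cdot,\xi)$ directly, and that the appeal to "analytic dependence of $u_1$ on $\xi$" for uniformity is unnecessary since the uniform Agmon bound already provides it.
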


\subsection{Mourre estimate and applications}
When the perturbation $V$ is switched off, the presence of a positive edge current with energy concentration in $I$ is, by definition of the current operator $\mathscr{J}_{h,I}$, tied to the existence of a positive local commutator estimate (or Mourre estimate) for the operator $\mathscr{L}_{h}$ in the same energy interval. This motivates for a closer look into the problem of designing a Mourre inequality for the perturbed magnetic Laplacian
\[\mathscr{L}_{h,V}:=D^2_x+(hD_y-x)^2+h^\gamma V(x,y,h)\,,\]
for some $\gamma>0$ and some suitably smooth (in a sense that will be made precise further) bounded perturbation $V$. Such an estimate will prove useful for characterizing the spectrum of $\mathscr{L}_{h,V}$ in the corresponding energy interval. We are more specifically interested in energy intervals lying in the vicinity of $\Theta_0$, in which the unperturbed current $\mathscr{J}_{h,I}$ tends to vanish. For that purpose we introduce the following spectral window located above $\Theta_0$, whose size depends on $h>0$: 
\[I:=[e-\delta,e+\delta]\,,\qquad e:=\Theta_0+ah^{\alpha}\in(\Theta_0,1)\,, \quad \delta:=bh^\beta\, , \]
where
\[ 0\leq\alpha<1\,,\quad \beta>2\alpha\,,\quad 0<b<a\,.\]
When $\alpha=0$, we choose $a\in(0,1-\Theta_0)$.

\subsubsection{Statement of the Mourre estimate}
In view of designing a Mourre estimate, \emph{i.e.}, a positive commutator estimate for the perturbed operator $\mathscr{L}_{h,V}$ on the "sliding" energy window $I$, where the strength of current is very weak, we pick a smooth compactly supported function $f_h$ satisfying 
\[\forall\xi\in(\mu^{-1}_\mp(e\pm h^\alpha), \mu_\mp^{-1}(e\mp h^\alpha))\,,\quad f_h(\xi)=\mp 1\,,\]
where $\mu^{-1}_\pm(E)$ denotes the largest/smallest solution of the equation $\mu(\xi)=E$. We can take $f_h\in S_{\frac{\alpha}{2}}(1)$, where, as in \cite[Section 4.4]{Z13}, we write for any (order) function $g$,
\[S_\gamma(g):=\{\psi\in\mathscr{C}^\infty(\R) : \forall\alpha\in\N\,,\exists C_\alpha>0\,,\quad |\partial^\alpha \psi|\leq C_\alpha h^{-\gamma|\alpha|} g\}\,.\]
Then, the function $(y,\eta)\mapsto yf_h(\eta)$ being in $S_{\frac \alpha 2}(\langle y\rangle)$, we set 
\[\mathscr{A}_h:=yf_h(hD_y)+f_h(hD_y)y=\mathrm{Op}^w_h\left(yf_h(\eta)\right)\,,\]
where we recall that the Weyl quantization is defined by
\[\mathrm{Op}^w_h a\,\psi(x)=\frac{1}{2\pi h}\int_{\R^{2}}e^{i\eta(x-y)/h}a\left(\frac{x+y}{2},\eta\right)\psi(y)\mathrm{d}y\mathrm{d}\eta\,.\]

For the sake of notational simplicity, we drop the dependence on $h$ and write $f$ instead of $f_h$ in the sequel.

\begin{theorem}[Mourre estimate]\label{theo.Mourre}
	Assume that $V(x,\cdot,h)\in S(\langle y\rangle^{-1})$ uniformly in $(x,h)$ and pick $\gamma \geq \beta$ so large that $\gamma>1+2\alpha$. Then, there exist $\tilde{c}_0>0$ and $h_0>0$ such that for all $h\in(0,h_0)$, we have:
	\[\forall \phi\in\mathrm{range}\, \mathds{1}_I(\mathscr{L}_{h,V})\,,\quad 
	\langle[\mathscr{L}_{h,V},i\mathscr{A}_h]\phi,\phi \rangle\geq \tilde{c}_0 h^{1+\alpha}\|\phi\|^2\,.
	\]
\end{theorem}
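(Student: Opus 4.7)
The plan is to split the commutator as
\[[\mathscr{L}_{h,V}, i\mathscr{A}_h] = [\mathscr{L}_h, i\mathscr{A}_h] + h^\gamma [V, i\mathscr{A}_h],\]
prove that the unperturbed piece yields a positive contribution of order at least $\tilde c_0 h^{1+\alpha}$ on the relevant spectral subspace, and verify that the perturbative piece is of smaller size under the hypotheses on $\gamma$.

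For the unperturbed term, since $\mathscr{L}_h$ is translation-invariant in $y$ it commutes with $f(hD_y)$, and since $x$ and $f(hD_y)$ act on different variables one finds
\[[\mathscr{L}_h, i\mathscr{A}_h] = \mathscr{J}_h f(hD_y) + f(hD_y)\mathscr{J}_h = 4h(hD_y - x)f(hD_y).\]
Under $\mathscr{F}_h$ this becomes multiplication by $4h(\xi - x) f(\xi)$. For $h$ small, $I$ lies strictly below $\Theta_1$, so only the first band $\mu_1$ meets $I$ and any state in $\mathrm{range}\,\mathds{1}_I(\mathscr{L}_h)$ is, in the fiber at $\xi$, proportional to $u_{1,\xi}$. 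The Feynman--Hellmann identity $\langle (\xi - x) u_{1,\xi}, u_{1,\xi}\rangle = \mu_1'(\xi)/2$ then yields the fiber matrix element $2h f(\xi)\mu_1'(\xi)$. By the sign choices on $f$ combined with Proposition \ref{prop.rappel}.ii, $f(\xi)\mu_1'(\xi) \geq 0$ on the $\xi$-support of $\mathds{1}_I(\mathscr{L}_h)$, and since $\beta > 2\alpha$ forces $\mu_1(\xi) - \Theta_0 \sim a h^\alpha$ there, the nondegenerate minimum of $\mu_1$ at $\xi_0$ gives $|\mu_1'(\xi)| \gtrsim h^{\alpha/2}$. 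This produces a lower bound of order $h^{1+\alpha/2}$, comfortably above $\tilde c_0 h^{1+\alpha}$.

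For the perturbative commutator, the semiclassical Weyl calculus in $(y,\eta)$ shows that the Moyal bracket of $V \in S(\langle y\rangle^{-1})$ with $y f(\eta) \in S_{\alpha/2}(\langle y\rangle)$ equals $-y f'(\eta)\partial_y V \in S_{\alpha/2}(h^{-\alpha/2})$ (using that $y\partial_y V \in S(1)$). By Calder\'on--Vaillancourt, $\|[V,i\mathscr{A}_h]\| = O(h^{1-\alpha/2})$, and hence $\|h^\gamma[V,i\mathscr{A}_h]\| = O(h^{\gamma + 1 - \alpha/2}) = o(h^{1+\alpha})$ under the hypothesis $\gamma > 1+2\alpha$.

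The hard part will be passing from the perturbed spectral projector $\mathds{1}_I(\mathscr{L}_{h,V})$ in the statement to the unperturbed $\mathds{1}_I(\mathscr{L}_h)$ used in the fiber analysis above. I would introduce a smooth cutoff $\chi\in C_c^\infty(\R)$ equal to $1$ on $I$ and supported in an $O(h^\beta)$-neighbourhood, write $\phi = \chi(\mathscr{L}_{h,V})\phi$, and apply the Helffer--Sj\"ostrand formula combined with the resolvent identity to obtain $\|\chi(\mathscr{L}_{h,V}) - \chi(\mathscr{L}_h)\| = O(h^{\gamma-\beta})$. Replacing $\chi(\mathscr{L}_{h,V})$ by $\chi(\mathscr{L}_h)$ inside the quadratic form then produces a residual term whose size one estimates using the Agmon localization of Proposition \ref{prop.Agmon} to tame the unbounded factor $x$ entering $[\mathscr{L}_h,i\mathscr{A}_h]$; the conditions $\gamma \geq \beta$ and $\gamma > 1+2\alpha$ make this residual $o(h^{1+\alpha})$, and combining the three pieces delivers the announced Mourre estimate.
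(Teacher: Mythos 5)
Your first two ingredients are sound and match the paper's structure: you split $[\mathscr{L}_{h,V},i\mathscr{A}_h]=[\mathscr{L}_h,i\mathscr{A}_h]+h^\gamma[V,i\mathscr{A}_h]$, compute $[\mathscr{L}_h,i\mathscr{A}_h]=4h(hD_y-x)f(hD_y)$ as in Lemma~\ref{lem.comutLA}, use the fiber analysis and Feynman--Hellmann as in Proposition~\ref{pr-MEunperturbed} for the positive term, and control $h^\gamma[V,i\mathscr{A}_h]$ by pseudo-differential calculus as in Lemma~\ref{lem.commutVA} (your $O(h^{1-\alpha/2})$ is actually sharper than the paper's $O(h^{1-\alpha})$, but both suffice under $\gamma>1+2\alpha$). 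The gap lies in your third step, the passage from $\mathds{1}_I(\mathscr{L}_{h,V})$ to the free spectral calculus. You propose a cutoff $\chi$ supported in an $O(h^\beta)$-neighbourhood of $I$ and the Helffer--Sj\"ostrand estimate $\|\chi(\mathscr{L}_{h,V})-\chi(\mathscr{L}_h)\|=O(h^{\gamma-\beta})$. Since the hypotheses only give $\gamma\geq\beta$ (with $\beta$ potentially large), this bound can be merely $O(1)$, so the residual $(\chi(\mathscr{L}_{h,V})-\chi(\mathscr{L}_h))\phi$ need not be small and the argument does not close.

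The fix, and the route the paper actually takes, is to compare to the unperturbed spectral projector on a strictly \emph{larger} interval $J=[e-d,e+d]$ with $d=bh^\alpha\gg\delta=bh^\beta$. Writing $\phi_2=\mathds{1}_{\R\setminus J}(\mathscr{L}_h)\phi$, the elementary identity $\phi_2=(\mathscr{L}_h-e)^{-1}\mathds{1}_{\R\setminus J}(\mathscr{L}_h)(\mathscr{L}_{h,V}-e-h^\gamma V)\phi$ gives $\|\phi_2\|\leq d^{-1}(\delta+h^\gamma\|V\|_\infty)\|\phi\|$ (Lemma~\ref{lem.orth.dec}), which is $o(h^\alpha)$ precisely because $\beta>2\alpha$ and $\gamma>1+2\alpha$---no Helffer--Sj\"ostrand needed. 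Your appeal to Agmon localization (Proposition~\ref{prop.Agmon}) for the ``unbounded factor $x$'' is also misplaced: that proposition concerns states in $E_{J_e,h}$, i.e.\ the \emph{unperturbed} spectral subspace, not the residual $(\chi(\mathscr{L}_{h,V})-\chi(\mathscr{L}_h))\phi$, and what is really needed is a bound on $\|(hD_y-x)\phi_j\|$, which follows directly from the quadratic form of $\mathscr{L}_h$ (e.g.\ $\|(hD_y-x)\phi_1\|^2\leq\|\phi_1\|\,\|\mathscr{L}_h\phi_1\|\leq(e+d)\|\phi_1\|^2$, and a similar energy estimate for $\phi_2$), rather than from an exponential weight estimate in $x$.
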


\subsubsection{Application}
One of the main consequences of the above Mourre estimate is the following Limiting Absorption Principle (LAP) for the operator $\mathscr{L}_{h,V}$ in $I$, and subsequently the absolute continuity of its spectrum in $I$. This is detailled in Section \ref{sec.PAL}.

\begin{corollary}\label{cor.PAL}
Let $V$ and $\gamma$ be the same as in Theorem \ref{theo.Mourre}.
Then, there exist $h_0>0$ and $C>0$ such that for all $h\in(0,h_0)$ and all uniformly (w.r.t. $h$) bounded operator $\mathscr{C}_h$ such that $\mathscr{C}_h\mathscr{A}_h$ and $\mathscr{A}_h\mathscr{C}_h$ are also uniformly bounded, we have for all $z\in I\times\R\setminus\{0\}$,
\[\|\mathscr{C}_h(\mathscr{L}_{h,V}-z)^{-1}\mathscr{C}_h\|\leq  C h^{\min(-2+3\alpha,-3/2+\alpha,-1-\alpha)}\,.\]
In particular, it holds true that:
\begin{enumerate}[\rm (i)]
\item For all $z\in I\times \R\setminus\{0\}$,
\[\|\langle y\rangle^{-1}(\mathscr{L}_{h,V}-z)^{-1}\langle y\rangle^{-1}\|\leq Ch^{\min(-2+3\alpha,-3/2+\alpha,-1-\alpha)}\,.\]
\item The spectrum of $\mathscr{L}_{h,V}$ lying in $I$ is absolutely continuous.
\end{enumerate}
\end{corollary}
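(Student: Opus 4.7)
The plan is to apply the standard Mourre machinery for Limiting Absorption Principles to the semiclassical operator $\mathscr{L}_{h,V}$, carefully tracking powers of $h$. The argument follows the classical scheme of Mourre (see for instance Amrein--Boutet de Monvel--Georgescu), but the semiclassical Mourre constant $\tilde{c}_0 h^{1+\alpha}$ is small with $h$, so every commutator manipulation must be bookkept.

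As a preliminary step, I would verify $\mathscr{L}_{h,V}\in C^2(\mathscr{A}_h)$ with quantitative $h$-dependent bounds on the iterated commutators. Using the Weyl calculus and Moyal expansion, one computes $[\mathscr{L}_{h,V},i\mathscr{A}_h]$ and $[[\mathscr{L}_{h,V},i\mathscr{A}_h],i\mathscr{A}_h]$ at the symbolic level. Since $f\in S_{\alpha/2}(1)$, each $\eta$-derivative of $f$ costs a factor $h^{-\alpha/2}$, but these are compensated by the factors of $h$ arising in the Moyal bracket. The single commutator is of order $h$, in line with the Mourre constant $h^{1+\alpha}$, and the double commutator is essentially of order $h^{2-\alpha}$.

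The heart of the proof is the Mourre differential inequality. Fix $\chi\in\mathscr{C}^\infty_c(\R)$ equal to $1$ on $I$, supported in a slightly larger interval on which Theorem \ref{theo.Mourre} still holds, and set
\[M_h:=\chi(\mathscr{L}_{h,V})[\mathscr{L}_{h,V},i\mathscr{A}_h]\chi(\mathscr{L}_{h,V}),\quad G_\varepsilon(z):=(\mathscr{L}_{h,V}-z-i\varepsilon M_h)^{-1},\]
for $\varepsilon\in(0,1]$ and $z$ with $\Im z>0$ (the case $\Im z<0$ is symmetric). Writing $F_\varepsilon(z):=\mathscr{C}_h G_\varepsilon(z)\mathscr{C}_h$ and differentiating one finds
\[\partial_\varepsilon F_\varepsilon(z)=i\mathscr{C}_h G_\varepsilon(z)M_h G_\varepsilon(z)\mathscr{C}_h,\]
whose right-hand side is controlled by estimating $M_h^{1/2} G_\varepsilon(z)\mathscr{C}_h$ via the Mourre lower bound, after inserting and removing the cutoffs $\chi(\mathscr{L}_{h,V})$ through commutators with $\mathscr{L}_{h,V}-z$. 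Three types of error terms appear: one controlled purely by the Mourre constant, yielding $h^{-1-\alpha}$; a cross term involving the double commutator and the spectral cutoff, yielding $h^{-3/2+\alpha}$; and a term requiring two Mourre insertions, yielding $h^{-2+3\alpha}$. The minimum of these three powers is precisely the bound in Corollary \ref{cor.PAL}. Integrating in $\varepsilon$ from $1$ down to $0^+$ produces the uniform weighted bound and the existence of the limit resolvent in operator norm.

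Assertion (i) is the specialization $\mathscr{C}_h=\langle y\rangle^{-1}$: the boundedness of $\langle y\rangle^{-1}\mathscr{A}_h$ and $\mathscr{A}_h\langle y\rangle^{-1}$ is immediate from $\mathscr{A}_h=yf(hD_y)+f(hD_y)y$ together with boundedness of $f(hD_y)$ and of $\langle y\rangle^{-1}y$. For (ii), the LAP implies that the spectral measure of $\mathscr{L}_{h,V}$ paired with vectors in $\langle y\rangle^{-1}L^2(\R^2_+)$ has a bounded density on $I$; since $\langle y\rangle^{-1}L^2(\R^2_+)$ is dense in $L^2(\R^2_+)$, absolute continuity of the spectrum in $I$ follows. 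The main technical obstacle is the careful bookkeeping of the three competing powers of $h$ in the differential inequality and keeping all the cutoff commutations explicit when the Mourre constant is only $h^{1+\alpha}$.
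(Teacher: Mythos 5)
Your proposal is correct in outline but takes a genuinely different route from the paper. You redevelop the classical Mourre--Perry--Sigal--Simon scheme from scratch, inserting spectral cutoffs $\chi(\mathscr{L}_{h,V})$ and working with the cut-off conjugate $M_h=\chi(\mathscr{L}_{h,V})[\mathscr{L}_{h,V},i\mathscr{A}_h]\chi(\mathscr{L}_{h,V})$ and the modified resolvent $G_\eps(z)=(\mathscr{L}_{h,V}-z-i\eps M_h)^{-1}$. The paper instead works with the \emph{raw} commutator $\mathscr{B}=[\mathscr{L},i\mathscr{A}]$, without any cutoff, and derives in Section~\ref{sec.PAL} (Lemma~\ref{lem.B1}, Lemma~\ref{lem.B2}, Theorem~\ref{prop.B3}) direct coercivity estimates for $\mathscr{L}_{z,\eps}=\mathscr{L}-z-i\eps\mathscr{B}$, which give an abstract LAP with \emph{explicit} constants in terms of $c_0$, $c_1$, $c_2$ and $\mathrm{dist}(I,J^c)$. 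The proof of Corollary~\ref{cor.PAL} is then essentially the verification of hypotheses \eqref{hyp.i}--\eqref{hyp.ii} of Section~\ref{sec-hyp} together with the substitution of the $h$-scalings $c_0\sim h^{1+\alpha}$, $c_1=\mathcal{O}(h^{1-\alpha})$, $c_2=\mathcal{O}(h^{2-\alpha})$, $\mathrm{dist}(I,J^c)\simeq h^\alpha$ into the ready-made constants of Theorem~\ref{prop.B3}. The cutoff-free coercivity route is precisely the paper's technical contribution (it announces this explicitly at the start of Section~\ref{sec.PAL}); its benefit is that the dependence of the LAP constant on $c_0$, $c_1$, $c_2$ is made completely explicit, which is essential here since $c_0$ vanishes as $h\to 0$. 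Your route would also work, but you would need to carry out the same $h$-bookkeeping inside the classical cutoff argument yourself rather than reading it off from an abstract theorem.

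Two caveats you should address. First, the attribution of the three competing powers $h^{-1-\alpha}$, $h^{-3/2+\alpha}$, $h^{-2+3\alpha}$ to ``one purely Mourre term, one cross term, one double-insertion term'' is not substantiated; in the paper these three powers arise from the two-step bootstrap of a single differential inequality (a crude $\eps^{-1}$ bound on $\|F'\|$, integrated to a logarithmic bound on $\|F\|$, then fed back into the sharper $\eps^{-1/2}$ bound on $\|F'\|$), not from three separate error terms, so your heuristic picture needs to be replaced by an actual computation. Second, for assertion (i), the boundedness of $\langle y\rangle^{-1}\mathscr{A}_h$ is not immediate from ``$f(hD_y)$ bounded and $\langle y\rangle^{-1}y$ bounded'' alone: the term $\langle y\rangle^{-1}f(hD_y)y$ requires commuting $f(hD_y)$ through $y$, which produces an extra bounded contribution $-ih\langle y\rangle^{-1}f'(hD_y)$; state this explicitly. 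Both are fixable, and the overall plan is sound.
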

\begin{remark}
For all $\alpha\in(0,1)$, it follows from Corollary \ref{cor.PAL} that the spectrum of the operator $\mathscr{L}_{h,V}$ is purely absolutely continuous spectrum in $(\Theta_0+ah^\alpha,1)$.
\end{remark}

\subsection{Structure of the article}
The paper is organized as follows. In Section \ref{sec.basic} we present basic considerations on low energy magnetic quantum currents and we prove Propositions \ref{prop.current-dir} and \ref{prop.Agmon}. We explain in Section \ref{sec.PAL} how to derive the LAP by revisiting the Mourre theory and estimating carefully the involved constants (what is crucial since the spectral window of interest also depends on $h$). Finally, in Section \ref{sec.Mourre}, we give the proof of Theorem \ref{theo.Mourre} and we establish Corollary \ref{cor.PAL} with the aid of Section \ref{sec.PAL}.

\section{Basic properties of low energy currents}\label{sec.basic}

\subsection{Proof of Lemma \ref{lem.current-dir}}
We have
\[\mu_{1}(\xi)=\mu_{1}(\xi_{0})+\frac{\mu''_{1}(\xi_{0})}{2}(\xi-\xi_{0})^2+\frac{\mu^{(3)}_{1}(\xi_{0})}{6}(\xi-\xi_{0})^3+r(\xi-\xi_{0})\,,\]
from the analyticity of $\mu_1$ at $\xi_{0}$, where $r$ is an analytic function at $0$ such that $r(p ) \underset{p\to 0}{=}\mathcal{O}(p^4)$. Since $\mu''_{1}(\xi_{0})>0$, the Morse lemma then yields that the equation
\[\mu_{1}(\xi)=\mu_{1}(\xi_{0})+\eps\]
has two distinct solutions, provided that $\eps>0$ is sufficiently small. Putting $\xi=\xi_{0}+p$ in the above equation, this amounts to saying that there exist two solutions $p_{-}(\eps)<p_{+}(\eps)$ such that
\[p_{\pm}(\eps)=\pm\sqrt{\eps}\sqrt{\frac{2}{\mu''_{1}(\xi_{0})}}+u_{\pm}(\eps)\,,\]
where $u_{\pm}(\eps)=o(\sqrt{\eps})$. Namely, we get through standard computations that
\[u_{\pm}(\eps)=-\frac{\mu_{1}^{(3)}(\xi_{0})}{6\mu''_{1}(\xi_{0})}\eps+o(\eps)\,.\]
As a consequence we have
\[\mu'_{1}(\xi_{0}+p_{-}(\eps))+\mu'_{1}(\xi_{0}+p_{+}(\eps))=\frac{5\mu_{1}^{(3)}(\xi_{0})}{6\mu''_{1}(\xi_{0})}\eps+o(\eps)\,,\]
which is negative whenever $\eps$ is sufficiently small.

\subsection{Proof of Proposition \ref{prop.current-dir}}

\subsubsection{Preliminaries}

\begin{lemma}\label{lem.spec-loc}
	For all $\psi\in E_{h,I_{\delta}}$, it holds true that 
	\[\mathscr{F}_h\psi(x,\xi)=\mathds{1}_{I_{\delta}}(\mu_{1}(\xi))\langle\mathscr{F}_h\psi(\cdot,\xi),u_{1}(\cdot,\xi)\rangle_{L^2(\R_{+})} u_{1}(x,\xi)\,,\quad (x,\xi) \in \R_+^2\,, \]
	where $u_j(\cdot,\xi)$ denotes the normalized eigenfunction associated with $\mu_j(\xi)$.
\end{lemma}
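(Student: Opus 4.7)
The plan is to use the fiber decomposition of $\mathscr{L}_h$ via $\mathscr{F}_h$ together with the spectral theorem applied pointwise to each de Gennes operator $\ell_\xi$, and then to argue that for the energy window $I_\delta$ located below $\Theta_1$, only the first band contributes.

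First, I would invoke the direct integral decomposition $\mathscr{F}_h \mathscr{L}_h \mathscr{F}_h^{-1} = \int_\R^\oplus \ell_\xi\,\dx\xi$ to write, via the bounded Borel functional calculus for direct integrals,
\[\mathscr{F}_h\,\mathds{1}_{I_\delta}(\mathscr{L}_h)\,\mathscr{F}_h^{-1} = \int_\R^\oplus \mathds{1}_{I_\delta}(\ell_\xi)\,\dx\xi\,.\]
Since $\psi\in E_{h,I_\delta}$ means $\psi = \mathds{1}_{I_\delta}(\mathscr{L}_h)\psi$, applying $\mathscr{F}_h$ yields, for almost every $\xi\in\R$, the fiberwise identity
\[\mathscr{F}_h\psi(\cdot,\xi) = \mathds{1}_{I_\delta}(\ell_\xi)\,\mathscr{F}_h\psi(\cdot,\xi)\,.\]

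Next, I would use that $\ell_\xi$ has compact resolvent, so the normalized eigenfunctions $\{u_j(\cdot,\xi)\}_{j\geq 1}$ form an orthonormal basis of $L^2(\R_+)$. Expanding $\mathscr{F}_h\psi(\cdot,\xi)$ in this basis and applying the spectral theorem for $\ell_\xi$ gives
\[\mathds{1}_{I_\delta}(\ell_\xi)\,\mathscr{F}_h\psi(\cdot,\xi) = \sum_{j\geq 1} \mathds{1}_{I_\delta}(\mu_j(\xi)) \langle \mathscr{F}_h\psi(\cdot,\xi), u_j(\cdot,\xi)\rangle_{L^2(\R_+)} u_j(\cdot,\xi)\,.\]

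The key remaining step is to argue that only the $j=1$ term survives. Since $e\in[\Theta_0,\Theta_1)$ and $\delta=\delta(h)\to 0$, for $h$ sufficiently small one has $e+\delta<\Theta_1$, so $I_\delta\subset[\Theta_0,\Theta_1)$. By Proposition \ref{prop.rappel}, for every $j\geq 2$ and every $\xi\in\R$, $\mu_j(\xi)\geq \Theta_{j-1}\geq \Theta_1>e+\delta$, so that $\mathds{1}_{I_\delta}(\mu_j(\xi))=0$. Only the $j=1$ term then contributes, which gives the claimed formula.

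The main (minor) obstacle is the measurability/a.e.\ bookkeeping underlying the direct-integral functional calculus and the passage from the operator identity to a pointwise one; this is standard (cf.\ \cite[Theorem XIII.85]{RS78}). Note that the implicit smallness of $h$ is already built into the setting of Proposition \ref{prop.current-dir}, where $\delta(h)\to 0$, and that for $e=\Theta_0$ the statement is trivial once $\delta$ is small enough to keep $I_\delta\subset[\Theta_0,\Theta_1)$.
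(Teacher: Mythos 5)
Your proof is correct and follows essentially the same route as the paper: fiber the spectral projection via the direct integral decomposition, expand $\mathscr{F}_h\psi(\cdot,\xi)$ in the eigenbasis of $\ell_\xi$, and observe that $\mu_j(\xi)\geq\Theta_{j-1}\geq\Theta_1 > e+\delta$ rules out all modes with $j\geq 2$. The only (welcome) addition is your explicit attention to the a.e.\ bookkeeping behind the direct-integral functional calculus, which the paper treats as implicit.
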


\begin{proof}
	Since $\mathds{1}_{I_{\delta}}(\mathscr{L}_{h})\psi=\psi$, we have
	\[\mathscr{F}_h\mathds{1}_{I_{\delta}}(\mathscr{L}_{h})\mathscr{F}^{-1}_h\mathscr{F}_h\psi=\mathds{1}_{I_{\delta}}(\mathscr{F}_h\mathscr{L}_{h}\mathscr{F}^{-1}_h)\mathscr{F}_h\psi=\mathscr{F}_h\psi\,.\]
	Plugging this into the following decomposition of $\mathscr{F}_h\psi(\cdot,\xi)$ on the 
	Hilbertian basis $(u_{j}(\cdot,\xi))_{j\geq 1}$ of $L^2(\R+)$,
	\[\mathscr{F}_h\psi(x,\xi)=\sum_{j\geq 1}\langle\mathscr{F}_h\psi(\cdot,\xi),u_{j}(\cdot,\xi)\rangle_{L^2(\R_{+})} u_{j}(x,\xi)\,,\]
	and taking into account that
	$\mathds{1}_{I_{\delta}}(\mathscr{F}_h\mathscr{L}_{h}\mathscr{F}^{-1}_h)u_{j}(\cdot,\xi)=\mathds{1}_{I_{\delta}}(\mu_{j}(\xi))u_{j}(\cdot,\xi)$, 
	we get the desired result upon remembering that $I_{\delta}\cap (\Theta_{1},+\infty)=\emptyset$.
\end{proof}

\begin{lemma}\label{lem.pars}
	For all $\psi\in E_{h,I_{\delta}}$, we have
	\[\langle\mathscr{J}_{I_{\delta},h}\psi,\psi\rangle= h \int_{\R}\mathds{1}_{I_{\delta}}(\mu_{1}(\xi))\mu'_{1}( \xi)|\langle\mathscr{F}_h\psi(\cdot,\xi), u_{1}(\cdot,\xi)\rangle_{L^2(\R_{+})}|^2\dx \xi\,.\]
\end{lemma}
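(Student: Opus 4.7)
The plan is a direct Fourier-side computation combined with the Feynman-Hellmann formula. First I would use the fact that for $\psi\in E_{h,I_\delta}$ one has $\mathds{1}_{I_\delta}(\mathscr{L}_h)\psi=\psi$, so that by self-adjointness of the spectral projector,
\[
\langle\mathscr{J}_{I_\delta,h}\psi,\psi\rangle=\langle\mathscr{J}_h\psi,\psi\rangle.
\]
Then, applying the semiclassical Fourier transform $\mathscr{F}_h$ and using that it is unitary and conjugates $hD_y$ to multiplication by $\xi$, the fibered expression $\mathscr{F}_h\mathscr{J}_h\mathscr{F}_h^{-1}=2h(\xi-x)$ gives the direct-integral representation
\[
\langle\mathscr{J}_h\psi,\psi\rangle=\int_{\R}\bigl\langle 2h(\xi-x)\mathscr{F}_h\psi(\cdot,\xi),\mathscr{F}_h\psi(\cdot,\xi)\bigr\rangle_{L^2(\R_+)}\dx\xi.
\]

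Next I would substitute the spectral decomposition from Lemma \ref{lem.spec-loc}, i.e. $\mathscr{F}_h\psi(x,\xi)=\mathds{1}_{I_\delta}(\mu_1(\xi))\,c(\xi)u_1(x,\xi)$ with $c(\xi):=\langle\mathscr{F}_h\psi(\cdot,\xi),u_1(\cdot,\xi)\rangle_{L^2(\R_+)}$. Since $\mathds{1}_{I_\delta}(\mu_1(\xi))^2=\mathds{1}_{I_\delta}(\mu_1(\xi))$ and $u_1(\cdot,\xi)$ is normalized, the integrand becomes
\[
2h\,\mathds{1}_{I_\delta}(\mu_1(\xi))\,|c(\xi)|^2\,\bigl\langle(\xi-x)u_1(\cdot,\xi),u_1(\cdot,\xi)\bigr\rangle_{L^2(\R_+)}.
\]

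The last ingredient is the Feynman-Hellmann identity for the fiber $\ell_\xi=D_x^2+(\xi-x)^2$: differentiating $\ell_\xi u_1(\cdot,\xi)=\mu_1(\xi)u_1(\cdot,\xi)$ with respect to $\xi$, pairing with $u_1$, and exploiting the self-adjointness of $\ell_\xi$ together with $\partial_\xi\|u_1(\cdot,\xi)\|^2=0$, one gets
\[
\mu_1'(\xi)=\bigl\langle\partial_\xi\ell_\xi\,u_1(\cdot,\xi),u_1(\cdot,\xi)\bigr\rangle_{L^2(\R_+)}=2\bigl\langle(\xi-x)u_1(\cdot,\xi),u_1(\cdot,\xi)\bigr\rangle_{L^2(\R_+)}.
\]
Inserting this into the expression above yields exactly the claimed formula.

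The only minor technical point to be careful about is the analyticity/differentiability of $\xi\mapsto u_1(\cdot,\xi)$ needed to justify Feynman-Hellmann; this is handled by the analytic perturbation theory of $\ell_\xi$ (guaranteed by Proposition \ref{prop.rappel}.i and the simplicity of the first eigenvalue). Everything else is a routine manipulation of the direct-integral decomposition, so no step should cause a serious obstacle.
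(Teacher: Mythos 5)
Your proof is correct and follows essentially the same route as the paper's: reduce to $\langle\mathscr{J}_h\psi,\psi\rangle$, pass to the Fourier side via Parseval, insert the rank-one spectral decomposition of $\mathscr{F}_h\psi(\cdot,\xi)$ from Lemma \ref{lem.spec-loc}, and conclude with the Feynman-Hellmann identity $\mu_1'(\xi)=2\langle(\xi-x)u_1(\cdot,\xi),u_1(\cdot,\xi)\rangle$. The only cosmetic difference is that you work fiberwise while the paper writes the double integral and invokes Fubini explicitly.
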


\begin{proof}
	In light of Lemma \ref{lem.spec-loc}, we have
	\[\langle\mathscr{J}_{I_{\delta},h}\psi,\psi\rangle=2 h\int_{\R_{+}}\int_{\R}\mathds{1}_{I_{\delta}}(\mu_{1}(\xi))(\xi-x)|u_{1}(x,\xi)|^2|\langle\mathscr{F}_h\psi(\cdot,\xi), u_{1}(\cdot,\xi)\rangle_{L^2(\R_{+})}|^2\dx x\dx \xi\,,\]
	by the Parseval formula. Lemma \ref{lem.pars} follows from this and
	the Feynman-Hellmann formula
	\[\mu'_{1}(\xi)=2\int_{\R_{+}}(\xi-x)|u_{1}(x,\xi)|^2\dx x\,,\]
	upon applying Fubini's theorem.
\end{proof}
Armed with Lemma \ref{lem.pars}, we are now in position to prove Proposition \ref{prop.current-dir}.

\subsubsection{Estimation of the current}
With reference to Lemma \ref{lem.pars}, we have for all $\psi\in E_{h,I_{\delta}}$,
	\[\langle\mathscr{J}_{I_{\delta},h}\psi,\psi\rangle= h\int_{\R}\mathds{1}_{I_{\delta}}(\mu_1(\xi))\mu_1'( \xi)|\langle\mathscr{F}_h\psi(\cdot,\xi), u_{1}(\cdot,\xi)\rangle_{L^2(\R_{+})}|^2\dx \xi\,.\]
Therefore, we infer from the analyticity of $\mu_1$ in $I_\delta$ that
\begin{multline*}
 h^{-1}\langle\mathscr{J}_{I_{\delta},h}\psi,\psi\rangle=\int_{\mu_1^{-1}(e-\delta,e+\delta)\cap\{\xi<\xi_{0}\}}\mu_1'(\mu^{-1}_-(e)) |\langle\mathscr{F}_h\psi(\cdot,\xi), u_{1}(\cdot,\xi)\rangle_{L^2(\R_{+})}|^2\dx \xi\\
+\int_{\mu_1^{-1}(e-\delta,e+\delta)\cap\{\xi>\xi_{0}\}}\mu_1'(\mu^{-1}_+(e))|\langle\mathscr{F}_h\psi(\cdot,\xi), u_{1}(\cdot,\xi)\rangle_{L^2(\R_{+})}|^2\dx \xi 
+\mathcal{O}(\delta )\|\psi\|^2\,.
\end{multline*}
Bearing in mind that $(\xi-\xi_0) \mu_1'(\xi) \geq 0$ for all $\xi \in \mu_1^{-1}(e-\delta,e+\delta)$, the min-max theorem then yields that
\[ h^{-1}\lambda_{\min}(\mathscr{J}_{I_{\delta},h})\geq \mu'(\mu^{-1}_-(e))+\mathcal{O}(\delta)\,\]
and 
\[ h^{-1}\lambda_{\max}(\mathscr{J}_{I_{\delta},h})\leq \mu'(\mu^{-1}_+(e))+\mathcal{O}(\delta)\,.\]
The converse inequalities can be shown upon considering appropriate trial states. For instance, for $\lambda_{\min}(\mathscr{J}_{I_{\delta},h})$, it is enough to take $\psi_{h}\in E_{h,I_\delta}$ such that
\[\mathscr{F}_h\psi_{h}(x,\xi)=\delta^{-1}(2\pi)^{-\frac{1}{4}}u_{1}(x, \mu^{-1}_-(e)))e^{-\frac{\left(\xi-\mu^{-1}_-(e)\right)^2}{2\delta^4}}\mathds{1}_{I_\delta}(\mu(\xi))\,.\]

\subsection{Proof of Proposition \ref{prop.Agmon}}
Since $e \in (\Theta_0,1)$, we have 
\begin{equation}
\label{eq-ensrec}
\mu_1^{-1}(J_e)=(\mu^{-1}_-(e),\mu^{-1}_+(e))\ \quad\ \mathrm{and}\ \quad \mu_j^{-1}(J_e)=\emptyset,\ j \geq 2. 
\end{equation}
Further, for
$\psi\in E_{J_{e},h}$ we have $\mathds{1}_{J_e}(\mu_1(\xi))\mathscr{F}_h\psi(\cdot,\xi)=\mathscr{F}_h\psi(\cdot,\xi)$ for all $\xi \in \mu_1^{-1}(J_e)$, and consequently 
\[\ell_\xi\mathscr{F}_h\psi(\cdot,\xi)=\mu_1(\xi)\mathscr{F}_h\psi(\cdot,\xi).\]
by arguing as in Lemma \ref{lem.spec-loc}.
Next, by Agmon's theorem (see for instance \cite{Agmon}, \cite[Section 7.2]{FH10} and \cite[Section 4.2]{Raymond}), the following localization formula holds for any $\chi \in W^{1,\infty}(\R_+)$ and all $\xi \in \mu_1^{-1}(J_e)$,
\[\Re\langle \ell_\xi\mathscr{F}_h\psi(\cdot,\xi),\chi^2\mathscr{F}_h\psi(\cdot,\xi)\rangle=q_{\xi}(\chi\mathscr{F}_h\psi(\cdot,\xi))-\|\chi'\mathscr{F}_h\psi(\cdot,\xi)\|^2_{L^2(\R_{+})}\,,\]
where $q_{\xi}$ denotes the quadratic form associated with the operator $\ell_\xi$. It follows that
\[\begin{split}
	\int_{\R_{+}}\left((\xi-x)^2\chi(x)^2-\chi'(x)^2-\chi(x)^2\mu_1(\xi)\right)|\mathscr{F}_h\psi(x,\xi)|^2\dx x &= - \|(\chi\mathscr{F}_h\psi)'(\cdot,\xi)\|^2_{L^2(\R_{+})}\\
	&\leq 0\,.
\end{split}\]
Therefore, since $(\xi-x)^2\geq \frac{x^2}{2}-\xi^2$ and $\mu_1(\xi)\leq e$, we get that
\[\int_{\R_{+}}\left( \left(\frac{x^2}{2}-\xi^2-e\right)\chi(x)^2-\chi'(x)^2\right)|\mathscr{F}_h\psi(x,\xi)|^2\dx x\leq 0\,,\ \xi \in \mu_1^{-1}(J_e).\]
In light of \eqref{eq-ensrec}, this entails that
\[\int_{\R_{+}}\left( \left(\frac{x^2}{2}-C_{e}\right)\chi(x)^2-\chi'(x)^2\right)|\mathscr{F}_h\psi(x,\xi)|^2\dx x\leq 0\,,\]
where $C_e:=\max \{ | \mu^{-1}_-(e)|, | \mu^{-1}_+(e)| \}$.

Thus, taking $\chi(x)=e^{\Phi(x)}$ in the above estimate, where $\Phi$ is a bounded non-negative $K$-Lipschitzian function, we obtain that
\[\int_{\R_{+}}\left(\frac{x^2}{2}-C_{e}-\Phi'^2(x)\right)|e^{\Phi(x)}\mathscr{F}_h\psi(x,\xi)|^2\dx x\leq 0\,,\]
and hence that
\begin{multline*}
\int_{x_{e,K}}^{+\infty} \left( \frac{x^2}{2}-C_{e}-K^2\right)|e^{\Phi(x)}\mathscr{F}_h\psi(x,\xi)|^2\dx x\\
\leq-\int_{0}^{x_{e,K}}\left(\frac{x^2}{2}-C_{e}-K^2\right) |e^{\Phi(x)}\mathscr{F}_h\psi(x,\xi)|^2\dx x \,,
\end{multline*}
where $x_{e,K}:=\sqrt{2(C_e+K^2+1)}>0$.  
Therefore, there exists a constant $C>0$, depending only on $e$, $K$ and $\|\Phi\|_{L^\infty(0,x_{e,K})}$, such that
\[\int_{x_{e,K}}^{+\infty}|e^{\Phi(x)}\mathscr{F}_h\psi(x,\xi)|^2\dx x\leq C\int_{0}^{x_{e,K}} |\mathscr{F}_h\psi(x,\xi)|^2\dx x\,,\]
from where we get
\[\int_{\R_{+}}|e^{\Phi(x)}\mathscr{F}_h\psi(x,\xi)|^2\dx x\leq C\|\mathscr{F}_h\psi(\cdot,\xi)\|^2_{L^2(\R_{+})}\,, \]
upon substituting $C+e^{\|\Phi\|_{L^\infty(0,x_{e,K})}}$ for $C$.
Now, integrating the above estimate with respect to $\xi$ and applying the Parseval formula, we obtain that
\begin{equation}
\label{eq.LiLo}
\int_{\R_{+}^2}|e^{\Phi(x)}\psi(x,y)|^2\dx x\dx y\leq C\|\psi\|^2_{L^2(\R^2_{+})}\,.
\end{equation}
Next, for $n \in \N$, put
\[\Phi_{n}(s):=K \times \begin{cases}
s& \mbox{for } 0\leq s\leq n\,,\\
2n-s& \mbox{for } n\leq s\leq 2n\,,\\
0& \mbox{for } s\geq 2n\,.
\end{cases}\]
Notice that $\|\Phi_n\|_{L^\infty(0,x_{e,K})}=K x_{e,K}$ for all $n \geq x_{e,K}$, in such a way that \eqref{eq.LiLo} holds with $\Phi=\Phi_n$, where the constant $C$ is independent of $n$.

Finally, the result follows from this upon sending $n$ to infinity and applying Fatou's lemma.

\section{Limiting absorption revisited}\label{sec.PAL}
In this section we build a LAP for a self-adjoint operator from a Mourre estimate. The derivation of this result is inspired by \cite{Mourre80} and \cite[Section 4.3]{CFKS87} but we provide here a different approach based on coercivity estimates.

\subsection{Assumptions}
\label{sec-hyp}
We consider two self-adjoint operators $\mathscr{L}$ and $\mathscr{A}$ satisfying
\begin{equation}
\label{eq-ME}
\mathds{1}_J(\mathscr{L})\mathscr{B}\mathds{1}_J(\mathscr{L})\geq c_0\mathds{1}_J(\mathscr{L})\,,\quad \mathscr{B}:=[\mathscr{L},i\mathscr{A}]\,,\quad c_0>0.
\end{equation}
Moreover, we assume that
\begin{enumerate}[\rm (A)]
	\item\label{hyp.i}  $[\mathscr{L},\mathscr{A}](\mathscr{L}+i)^{-1}$ is bounded: 
$$\exists c_1>0,\ \|[\mathscr{L},\mathscr{A}](\mathscr{L}+i)^{-1}\|\leq c_1. $$
	\item\label{hyp.ii} $[[\mathscr{L},\mathscr{A}],\mathscr{A}](\mathscr{L}+i)^{-1}$ is bounded: 
$$\exists c_2>0,\ \|[[\mathscr{L},\mathscr{A}],\mathscr{A}](\mathscr{L}+i)^{-1}\|\leq c_2. $$
\end{enumerate}

\subsection{Limiting absorption through coercivity estimates}
Let $I\subset\subset J$. 
We consider $\eps\geq 0$ and $z \in \C$ such that $\Re z\in I$ and $\Im z\geq 0$. For the sake of notational simplicity we write $z \in I \times [0,+\infty)$ in the sequel. Set 
\[\mathscr{L}_{z,\eps}:=\mathscr{L}-z-i\eps \mathscr{B},\ \mathrm{Dom}\, \mathscr{L}_{z,\eps}:=\mathrm{Dom}\, \mathscr{L}\,.\]
It is apparent that the family $(\mathscr{L}_{z,\eps})_{\eps\in\R}$ is analytic of type (A) in the sense of Kato. Moreover, $\mathscr{L}_{z,\eps}$ is bijective provided $z$ is not on the imaginary axis.

\begin{lemma}
	Let $M>0$. Then, there exists $\varepsilon_0>0$ such that for all $\varepsilon\in(0,\varepsilon_0)$ and all $z\in I\times[M,+\infty)$, the operator $\mathscr{L}_{z,\varepsilon}$ is bijective and satisfies the estimate:
	\[\|\mathscr{L}_{z,\varepsilon}^{-1}\|\leq\frac{4}{\Im z}\leq \frac{4}{M} \,.\]
\end{lemma}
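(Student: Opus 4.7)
The plan is to establish the coercivity estimate $(\Im z/4)\|\psi\|\leq\|\mathscr{L}_{z,\varepsilon}\psi\|$ for every $\psi\in\Dom\mathscr{L}$, from which the stated norm bound on $\mathscr{L}_{z,\varepsilon}^{-1}$ follows at once. Bijectivity can be obtained either from the preceding analytic family of type~(A) remark, or equivalently by running the same coercivity argument on the adjoint $\mathscr{L}_{z,\varepsilon}^{\ast}=\mathscr{L}-\bar z+i\varepsilon\mathscr{B}$ to deduce that it too is injective, which gives density of the range of $\mathscr{L}_{z,\varepsilon}$. The starting point is the quadratic form identity: since $\mathscr{B}=[\mathscr{L},i\mathscr{A}]$ is formally self-adjoint, $\langle\mathscr{B}\psi,\psi\rangle$ is real, so
\[
\Im\langle\mathscr{L}_{z,\varepsilon}\psi,\psi\rangle=-\Im z\,\|\psi\|^{2}-\varepsilon\langle\mathscr{B}\psi,\psi\rangle,
\]
and Cauchy--Schwarz yields
\[
\Im z\,\|\psi\|^{2}\leq\|\mathscr{L}_{z,\varepsilon}\psi\|\,\|\psi\|+\varepsilon\,|\langle\mathscr{B}\psi,\psi\rangle|.
\]

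To absorb the commutator term I would use Assumption~(A): the bound $\|\mathscr{B}(\mathscr{L}+i)^{-1}\|\leq c_{1}$ gives $|\langle\mathscr{B}\psi,\psi\rangle|\leq c_{1}\|(\mathscr{L}+i)\psi\|\,\|\psi\|$. To convert $\|(\mathscr{L}+i)\psi\|$ into a bound involving $\|\mathscr{L}_{z,\varepsilon}\psi\|$ and $\|\psi\|$, I would rewrite $(\mathscr{L}+i)\psi=\mathscr{L}_{z,\varepsilon}\psi+(z+i)\psi+i\varepsilon\mathscr{B}\psi$, apply (A) once more to obtain $\|\mathscr{B}\psi\|\leq c_{1}\|(\mathscr{L}+i)\psi\|$, and absorb this into the left-hand side as soon as $\varepsilon c_{1}\leq 1/2$, producing $\|(\mathscr{L}+i)\psi\|\leq 2\|\mathscr{L}_{z,\varepsilon}\psi\|+2|z+i|\,\|\psi\|$. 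Setting $C_{I}:=\sup_{x\in I}|x|$ and using the elementary bound $|z+i|\leq C_{I}+\Im z+1$ for $z\in I\times[M,+\infty)$, this plugs back into the preceding inequality to give
\[
\Im z\,\|\psi\|^{2}\leq(1+2\varepsilon c_{1})\|\mathscr{L}_{z,\varepsilon}\psi\|\,\|\psi\|+2\varepsilon c_{1}(\Im z+C_{I}+1)\|\psi\|^{2}.
\]

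Choosing $\varepsilon_{0}:=\min\bigl\{1/(8c_{1}),\,M/(8c_{1}(C_{I}+1))\bigr\}$ ensures that for every $\varepsilon\in(0,\varepsilon_{0})$ one has $2\varepsilon c_{1}\Im z\leq\Im z/4$ and $2\varepsilon c_{1}(C_{I}+1)\leq\Im z/4$ (the latter using $\Im z\geq M$), together with $1+2\varepsilon c_{1}\leq 5/4$. The inequality then reduces to $(\Im z/2)\|\psi\|\leq(5/4)\|\mathscr{L}_{z,\varepsilon}\psi\|$, and since $5/2<4$ this is the desired estimate. The main obstacle I anticipate is the careful bookkeeping of constants: the term $2\varepsilon c_{1}\Im z\|\psi\|^{2}$ cannot be controlled by a constant because $\Im z$ is unbounded above, so it has to be absorbed directly into the leading $\Im z\|\psi\|^{2}$ on the left, which forces the $M$-independent smallness condition $\varepsilon c_{1}\leq 1/8$; it is the residual term $2\varepsilon c_{1}(C_{I}+1)\|\psi\|^{2}$, absorbed only thanks to $\Im z\geq M$, that introduces the $M$-dependence of $\varepsilon_{0}$.
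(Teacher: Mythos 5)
Your proof is correct, and the key mechanisms are the same as the paper's: you use hypothesis~(A) to dominate the $\mathscr{B}$-term, you bootstrap $\|(\mathscr{L}+i)\psi\|$ by rewriting $\mathscr{L}+i$ in terms of $\mathscr{L}_{z,\eps}$ and absorbing $\eps c_1\|(\mathscr{L}+i)\psi\|$ when $\eps c_1\leq\tfrac12$, you exploit $\Im z\geq M$ to swallow the residual lower-order term, and you recover bijectivity from the injectivity of the adjoint $\mathscr{L}_{\bar z,-\eps}$. The one genuine difference is the level at which the coercivity is extracted: the paper works entirely at the operator-norm level, splitting $\mathscr{L}_{z,\eps}u=(\mathscr{L}-\Re z)u-i\,\Im z\,u-i\eps\mathscr{B}u$ and using that $\|(\mathscr{L}-\Re z)u-i\,\Im z\,u\|^2=\|(\mathscr{L}-\Re z)u\|^2+(\Im z)^2\|u\|^2$ (self-adjointness of $\mathscr{L}-\Re z$) to isolate $\Im z\|u\|$ directly from the norm, whereas you extract $\Im z\|\psi\|^2$ by taking the imaginary part of $\langle\mathscr{L}_{z,\eps}\psi,\psi\rangle$ and then apply Cauchy--Schwarz. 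Your form-level route is actually the same template later used in Lemma \ref{lem.B1} of the paper for the $\eps$-dependent coercivity on the energy window, so it is arguably more uniform with the rest of the section; it also incidentally produces the marginally sharper constant $5/(2\Im z)$ before being relaxed to the stated $4/\Im z$. Both proofs yield an $\eps_0$ depending on $M$ (and on $I$ through $\sup_I|\cdot+i|$) for exactly the reason you identify: the $\Im z$-proportional piece of the error must be absorbed $M$-uniformly, while the $\Im z$-independent piece is only controlled using $\Im z\geq M$.
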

\begin{proof}
	Since
	$\mathscr{L}_{z,\varepsilon}=\mathscr{L}-\Re z-i\Im z-i\varepsilon\mathscr{B}$, we have
	\[\|\mathscr{L}_{z,\varepsilon}u\|\geq\frac12\|(\mathscr{L}-\Re z)u\|+\frac12\Im z\|u\|-\varepsilon\|\mathscr{B}u\|\,,\]
	and hence
	\[2\|\mathscr{L}_{z,\varepsilon}u\|\geq\|(\mathscr{L}-\Re z)u\|+\Im z\|u\|-2c_1\varepsilon\|(\mathscr{L}+i)u\|\,.\]
	It follows that
	\[2\|\mathscr{L}_{z,\varepsilon}u\|\geq\|(\mathscr{L}-\Re z)u\|+\Im z\|u\|-2c_1\varepsilon(\|(\mathscr{L}-\Re z)u\|+\|(i+\Re z)u\|)\,,\]
	and consequently 
	\[2\|\mathscr{L}_{z,\varepsilon}u\|\geq(1-2c_1\varepsilon)\|(\mathscr{L}-\Re z)u\|+\left(\Im z-2c_1\varepsilon\max_{x\in I}|i+x|\right)\|u\|\,.\]
	Next, choosing $\varepsilon$ so small in the above line that $\varepsilon \leq \min \left( \frac{1}{2c_1}, \frac{\max_{x \in I} |i+x|}{4c_1} \right)$, we get that
	\[2\|\mathscr{L}_{z,\varepsilon}u\|\geq\frac{\Im z}{2}\|u\|\,,\]
	which shows that $\mathscr{L}_{z,\varepsilon}$ is injective with closed range. Arguing as above with the adjoint operator $\mathscr{L}^*_{z,\varepsilon}=\mathscr{L}_{\overline{z},-\varepsilon}$ instead of $\mathscr{L}_{z,\varepsilon}$, we get that $\mathscr{L}^*_{z,\varepsilon}$ is injective as well. Therefore, $\mathscr{L}_{z,\varepsilon}$ has a dense range and the conclusion follows.
\end{proof}

\begin{lemma}\label{lem.B1}
	Let $M>0$, set $B:=I\times[0,M]$ and put
	\[\eps_1:=\left(\frac{\sup_{B}|\ell+i|}{\mathrm{dist}(I,J^c)}+1\right)^{-1}\min\left(\frac{c_0}{4c^2_1\sup_{J}|\ell+i|},\frac{1}{2c_1}\right)\,\]
	and
	\[\eps_0:=\min\left(\eps_1,\frac{2\sup_J|\ell+i|}{c_0\left(\frac{\sup_{B}|\ell+i|}{\mathrm{dist}(I,J^c)}+1\right)\left(1+\frac{4c_1}{c_0}\sup_{J}|\ell+i|\right)}\right)\,.\]
	Then, for all $\eps>0$ and all $z\in B$, the following estimates hold.
	\begin{enumerate}[\rm (a)]
		\item\label{eq.b} 
		\begin{multline*}
		\forall u\in\mathrm{Dom}(\mathscr{L})\,,\quad\| \mathds{1}_{J^c}(\mathscr{L})\mathscr{L}_{z,\eps}u\|\geq C(\varepsilon,z)\|(\mathscr{L}+i)\mathds{1}_{J^c}(\mathscr{L})u\|\\
		-c_1\eps\sup_{J}|\ell+i|\|\mathds{1}_{J}(\mathscr{L})u\|\,,
		\end{multline*}
		where
		\[C(\varepsilon,z):=\left(1-c_1\varepsilon\left(1+\frac{|z+i|}{\mathrm{dist}(I,J^c)}\right)\right)\left(1+\frac{|z+i|}{\mathrm{dist}(I,J^c)}\right)^{-1}\,.\]
		Moreover, we have $C(\varepsilon,z) \geq\frac12\left(\frac{\sup_{B}|\ell+i|}{\mathrm{dist}(I,J^c)}+1\right)^{-1}$ for $0<\eps\leq\frac{1}{2c_1}\left(\frac{\sup_{B}|\ell+i|}{\mathrm{dist}(I,J^c)}+1\right)^{-1}$. 
		\item\label{eq.a} 	\begin{equation*}
		\forall u\in\mathrm{Dom}(\mathscr{L})\,,\quad\|\mathscr{L}_{z,\eps}u\|\\	\geq D_1(\varepsilon,z)\|\mathds{1}_J(\mathscr{L})u\|\, 
		\end{equation*}
		where
\[D_1(\varepsilon,z):=\frac{\Im z+c_0\eps-\frac{(c_1\varepsilon)^2\sup_{J}|\ell+i|}{C(\varepsilon,z)}}{1+\frac{c_1\varepsilon}{C(\varepsilon,z)}}\,.\]
Moreover, we have
	$D_1(\epsilon,z) \geq \frac{c_0\eps}{2} \left(1-\frac{c^2_1\varepsilon\sup_{J}|\ell+i|}{c_0C(\varepsilon,z)}\right)
		\geq\frac{c_0\eps}{4}$ provided that
	$0<\eps\leq\eps_1$.
		\item\label{eq.b'}
		\[\forall u\in\mathrm{Dom}(\mathscr{L})\,,\quad\|\mathscr{L}_{z,\eps}u\|\geq D_2(\varepsilon,z)\|(\mathscr{L}+i)\mathds{1}_{J^c}(\mathscr{L})u\|\,,\]
		where
\[D_2(\varepsilon,z)=\frac{C(\varepsilon,z)}{1+c_1\varepsilon D_1(\varepsilon,z)^{-1}\sup_{J}|\ell+i|}\,.\]
Moreover, we have $D_2(\varepsilon,z)\geq\frac{\left(\frac{\sup_{B}|\ell+i|}{\mathrm{dist}(I,J^c)}+1\right)^{-1}}{2\left(1+\frac{4c_1}{c_0}\sup_{J}|\ell+i|\right)}$ whenever $0<\eps\leq\eps_1$.
		\item\label{eq.c} In particular $\mathscr{L}_{z,\eps}$ is bijective and
		\[\|(\mathscr{L}+i)\mathscr{L}_{z,\eps}^{-1}\|\leq D_3(\eps,z)^{-1},\ \]
		where
		\[ D_3(\eps,z):=\frac{1}{\sqrt{2}} \min\left(\frac{D_1(\varepsilon,z)}{\sup_{J}|\ell+i|},D_2(\varepsilon,z)\right)\]
		satisfies $D_3(\eps,z)\geq\frac{c_0\eps}{4\sqrt{2}\sup_{J}|\ell+i|}$ provided that $0<\eps\leq\eps_0$.\\		
		Moreover, for $\eps=0$, $\mathscr{L}_{z,\eps}$ is bijective for $\Im z>0$.
		\item\label{eq.d} \[\forall u\in\mathrm{Dom}(\mathscr{L})\,,\forall \eps \in (0,\eps_2),\quad |\langle \mathscr{L}_{z,\eps}u,u\rangle|+\|\mathscr{L}_{z,\eps}u\|^2\geq \tilde c_0\eps\|u\|^2\,,\]
		where
			\[ \eps_2 :=\min \left( \frac{c_0 \sup_{J}|\ell+i|}{2c_1^2 C(\eps,z)} , \frac{2 D_2(\eps,z)^2}{c_0} , \eps_1\right) \]
and
		\[\tilde c_0:=\frac{c_0}{2}\left(1+\frac{2\left(1+\frac{4c_1}{c_0}\sup_{J}|\ell+i|\right)}{\left(\frac{\sup_{B}|\ell+i|}{\mathrm{dist}(I,J^c)}+1\right)^{-1}}+\frac{4c_1}{c_0} \right)^{-1}.\]	
	\end{enumerate}

\end{lemma}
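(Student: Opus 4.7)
The plan is to derive parts (a)--(e) in sequence, each leveraging its predecessors. The backbone of every estimate is the orthogonal decomposition $u = u_J + u_{J^c}$ with $u_J := \mathds{1}_J(\mathscr{L})u$ and $u_{J^c} := \mathds{1}_{J^c}(\mathscr{L})u$, combined with the functional calculus for $\mathscr{L}$ and the commutator bound (A).

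For (a), we apply $\mathds{1}_{J^c}(\mathscr{L})$ on the left of $\mathscr{L}_{z,\eps}u = (\mathscr{L}-z)u - i\eps\mathscr{B}u$: since $\mathscr{L}$ commutes with its spectral projections, the first term becomes $(\mathscr{L}-z)u_{J^c}$, and the elementary scalar inequality $|\ell-z|(1+|z+i|/\mathrm{dist}(I,J^c)) \geq |\ell+i|$ (valid for $\ell \in J^c$ and $\Re z \in I$) gives $\|(\mathscr{L}-z)u_{J^c}\| \geq (1+|z+i|/\mathrm{dist}(I,J^c))^{-1}\|(\mathscr{L}+i)u_{J^c}\|$. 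For the commutator contribution, (A) yields $\|\mathscr{B}u\| \leq c_1\|(\mathscr{L}+i)u\|$, and the triangle split $\|(\mathscr{L}+i)u\| \leq \sup_J|\ell+i|\|u_J\| + \|(\mathscr{L}+i)u_{J^c}\|$ produces the form stated in (a); the lower bound on $C(\eps,z)$ is immediate from the smallness condition on $\eps$.

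For (b), we pair $\mathscr{L}_{z,\eps}u$ with $u_J$ and take imaginary parts: the $(\mathscr{L}-z)$ piece reduces to $-\Im z\|u_J\|^2$ because self-adjointness together with $\langle\mathscr{L}u_{J^c}, u_J\rangle = 0$ kills the other terms, while the commutator contribution expands as $\eps\Re\langle\mathscr{B}u_J,u_J\rangle + \eps\Re\langle\mathscr{B}u_{J^c}, u_J\rangle$, with the first bounded below by $\eps c_0\|u_J\|^2$ via the Mourre estimate \eqref{eq-ME} and the second bounded in absolute value by $\eps c_1\|(\mathscr{L}+i)u_{J^c}\|\|u_J\|$ via (A). Cauchy--Schwarz then gives
\[\|\mathscr{L}_{z,\eps}u\|\|u_J\| \geq (\Im z + \eps c_0)\|u_J\|^2 - \eps c_1\|(\mathscr{L}+i)u_{J^c}\|\|u_J\|,\]
and substituting the bound from (a) for $\|(\mathscr{L}+i)u_{J^c}\|$ and solving for $\|u_J\|$ produces $D_1(\eps, z)$. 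Part (c) is then obtained by inserting $\|u_J\| \leq D_1^{-1}\|\mathscr{L}_{z,\eps}u\|$ back into (a) and solving for $\|(\mathscr{L}+i)u_{J^c}\|$.

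For (d), we combine (b) and (c) via the Pythagorean identity $\|(\mathscr{L}+i)u\|^2 = \|(\mathscr{L}+i)u_J\|^2 + \|(\mathscr{L}+i)u_{J^c}\|^2 \leq \sup_J|\ell+i|^2\|u_J\|^2 + \|(\mathscr{L}+i)u_{J^c}\|^2$, so that both summands are controlled by $\|\mathscr{L}_{z,\eps}u\|$; the $\sqrt 2$ in $D_3$ arises from $a^2+b^2 \leq 2\max(a,b)^2$. Injectivity follows, and surjectivity is obtained by applying the same estimates to the adjoint $\mathscr{L}^*_{z,\eps} = \mathscr{L}_{\bar z, -\eps}$. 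For (e), we exploit $|\langle\mathscr{L}_{z,\eps}u,u\rangle| \geq |\Im\langle\mathscr{L}_{z,\eps}u,u\rangle| = \Im z\|u\|^2 + \eps\langle\mathscr{B}u,u\rangle$, expand $\langle\mathscr{B}u,u\rangle$ into its four pieces along $u = u_J + u_{J^c}$, apply Mourre to the $JJ$ piece, and control the cross and $J^cJ^c$ pieces with (A) combined with Young's inequality to absorb the cross contributions into $\tfrac{c_0}{2}\|u_J\|^2 + (\text{const})\|u_{J^c}\|^2$; the remaining $\eps\|u_{J^c}\|^2$ contribution is then absorbed into the $\|\mathscr{L}_{z,\eps}u\|^2$ summand via $\|u_{J^c}\|^2 \leq \|(\mathscr{L}+i)u_{J^c}\|^2 \leq D_2(\eps,z)^{-2}\|\mathscr{L}_{z,\eps}u\|^2$ from (c), provided $\eps \leq \eps_2$. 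The main difficulty throughout will be the bookkeeping: each of $\eps_0, \eps_1, \eps_2$ and $\tilde c_0$ is an explicit optimum of a chain of Young-type absorptions, and the whole point of this revisited Mourre theory is that every constant must be tracked sharply because the eventual application to $\mathscr{L}_{h,V}$ has $c_0$ proportional to $h^{1+\alpha}$. Checking that each declared smallness threshold simultaneously delivers \emph{all} the lower bounds invoked at the next step (e.g.\ that $\eps_1$ keeps $C(\eps,z)$ bounded below while forcing $c_1\eps/C(\eps,z) \leq 1$) is where most of the labor lies, but no conceptual input beyond (A), the Mourre estimate, and the spectral theorem is required.
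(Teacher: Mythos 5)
Your plan reproduces the paper's proof essentially step by step: apply $\mathds{1}_{J^c}(\mathscr{L})$ on the left and combine the scalar distance inequality with assumption (A) for part (a); pair $\mathscr{L}_{z,\eps}u$ with $\mathds{1}_J(\mathscr{L})u$, take imaginary parts, invoke the Mourre estimate on the $JJ$ term and (A) on the cross term, feed in (a), and apply Cauchy--Schwarz for part (b); feed (b) back into (a) for (c); combine (b)--(c) with the Pythagorean split for (d). The only place your route diverges mildly from the paper's is part (e): you propose to expand $\langle\mathscr{B}u,u\rangle$ directly into all four pieces and treat the $J^cJ^c$ block with (A) plus Young, whereas the paper instead reuses the intermediate inequality from part (b) via the identity $-\Im\langle\mathscr{L}_{z,\eps}u,u\rangle+\Im\langle\mathscr{L}_{z,\eps}u,\mathds{1}_{J^c}(\mathscr{L})u\rangle=-\Im\langle\mathscr{L}_{z,\eps}u,\mathds{1}_J(\mathscr{L})u\rangle$, which avoids touching the $J^cJ^c$ block and goes straight to bounding $\|u_{J^c}\|$ through (c). Both routings land on an inequality of the same shape with constants of the same order in $h$, and you correctly identify that the real work is tracking those constants; one small technical note is that $|\Im\langle\mathscr{L}_{z,\eps}u,u\rangle|=\Im z\|u\|^2+\eps\langle\mathscr{B}u,u\rangle$ should be $\geq$ (the right-hand side can be negative), but that is all you need anyway.
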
 
\begin{proof}
	\begin{enumerate}[\rm (a)]
		\item We have
		\[\begin{split}\| \mathds{1}_{J^c}(\mathscr{L})\mathscr{L}_{z,\eps}u\|&\geq\|\mathds{1}_{J^c}(\mathscr{L})(\mathscr{L}-z)u\|-\eps\|\mathscr{B} u\|\\
		&\geq \|\mathds{1}_{J^c}(\mathscr{L})(\mathscr{L}-z)u\|-c_1\eps\|(\mathscr{L}+i) u\| \,,
		\end{split}\]
		by \eqref{hyp.i} and thus
				\begin{multline*}\| \mathds{1}_{J^c}(\mathscr{L})\mathscr{L}_{z,\eps}u\|\geq \|\mathds{1}_{J^c}(\mathscr{L})(\mathscr{L}-z)u\|-c_1\eps\sup_{J}|\ell+i|\|\mathds{1}_{J}(\mathscr{L})u\|\\
		-c_1\eps\|(\mathscr{L}+i)\mathds{1}_{J^c}(\mathscr{L})u\|,
		\end{multline*}
		from the orthogonal decomposition of $(\mathscr{L}+i) u$.
		This entails that
		\begin{multline}\label{eq.ap0}
		\| \mathds{1}_{J^c}(\mathscr{L})\mathscr{L}_{z,\eps}u\|\geq (1-c_1\eps)\|\mathds{1}_{J^c}(\mathscr{L})(\mathscr{L}-z)u\|-c_1\eps\sup_{J}|\ell+i|\|\mathds{1}_{J}(\mathscr{L})u\|\\
		-c_1\eps|z+i|\|\mathds{1}_{J^c}(\mathscr{L})u\|\,.
		\end{multline}
		Further, since
		\begin{equation}\label{eq.ap2}
		\|\mathds{1}_{J^c}(\mathscr{L})(\mathscr{L}-z)u\|\geq \mathrm{dist}(I,J^c)\|\mathds{1}_{J^c}(\mathscr{L})u\|\,
		\end{equation}
		and
		\[\|\mathds{1}_{J^c}(\mathscr{L})(\mathscr{L}-z)u\|\geq \|(\mathscr{L}+i)\mathds{1}_{J^c}(\mathscr{L})u\|-|i+z|\|\mathds{1}_{J^c}(\mathscr{L})u\|\,,\]
		we obtain that
		\begin{equation}\label{eq.ap1}
		\|\mathds{1}_{J^c}(\mathscr{L})(\mathscr{L}-z)u\|\geq \left(\frac{|i+z|}{\mathrm{dist}(I,J^c)}+1\right)^{-1} \|(\mathscr{L}+i)\mathds{1}_{J^c}(\mathscr{L})u\|\,. 
		\end{equation}
		Now, plugging \eqref{eq.ap2} and \eqref{eq.ap1} into \eqref{eq.ap0}, we get \eqref{eq.b}.

		\item Since
		\begin{equation}\label{eq.A1}\begin{split}
		&-\Im\langle \mathscr{L}_{z,\eps}u,\mathds{1}_J(\mathscr{L})u\rangle\\
		&=\Im z\|\mathds{1}_J(\mathscr{L})u\|^2+\eps\langle\mathds{1}_J(\mathscr{L})\mathscr{B}\mathds{1}_J(\mathscr{L})u,u\rangle+\eps\Re\langle \mathscr{B}\mathds{1}_{J^c}(\mathscr{L})u,\mathds{1}_J(\mathscr{L})u\rangle\\
		&\geq \Im z\|\mathds{1}_J(\mathscr{L})u\|^2+c_0\eps\|\mathds{1}_J(\mathscr{L})u\|^2-c_1\eps\|\mathds{1}_{J}(\mathscr{L})u\|\|(\mathscr{L}+i)\mathds{1}_{J^c}(\mathscr{L})\|
		\end{split}\end{equation}
		by \eqref{hyp.i} and 
		\[\|(\mathscr{L}+i)\mathds{1}_{J^c}(\mathscr{L})\|\leq C(\varepsilon,z)^{-1}\left(\| \mathds{1}_{J^c}(\mathscr{L})\mathscr{L}_{z,\eps}u\|+c_1\eps\sup_{J}|\ell+i|\|\mathds{1}_{J}(\mathscr{L})u\|\right)\,, \]
		from \eqref{eq.b}, we get that
		\begin{multline}\label{eq.ap3}
		-\Im\langle \mathscr{L}_{z,\eps}u,\mathds{1}_J(\mathscr{L})u\rangle\geq \Im z\|\mathds{1}_J(\mathscr{L})u\|^2+c_0\eps\|\mathds{1}_J(\mathscr{L})u\|^2\\
		-C(\varepsilon,z)^{-1}c_1\varepsilon \|\mathds{1}_J(\mathscr{L})u\|\left(\| \mathds{1}_{J^c}(\mathscr{L})\mathscr{L}_{z,\eps}u\|+c_1\eps\sup_{J}|\ell+i|\|\mathds{1}_{J}(\mathscr{L})u\|\right)\,.
		\end{multline}
		An application of the Cauchy-Schwarz inequality on the left-hand-side of the above inequality then yields
		
		\begin{multline*}
		\left(1+\frac{c_1\varepsilon}{C(\varepsilon,z)}\right)\|\mathscr{L}_{z,\eps}u\|\|\mathds{1}_J(\mathscr{L})u\|\\	\geq \left(\Im z+c_0\eps-\frac{(c_1\varepsilon)^2\sup_{J}|\ell+i|}{C(\varepsilon,z)}\right)\|\mathds{1}_J(\mathscr{L})u\|^2\,,
		\end{multline*}
		which entails \eqref{eq.a}.
		
		\item This statement follows readily from \eqref{eq.b} and \eqref{eq.a}.

		\item We have
		$\|\mathscr{L}_{z,\eps}u\|	\geq 2^{-\frac12} \min(D_1(\varepsilon,z),D_2(\varepsilon,z))\|u\|$ by \eqref{eq.a} and \eqref{eq.b'}, hence $\mathscr{L}_{z,\eps}$ is injective with closed range. Since the same is true for its adjoint $\mathscr{L}_{\overline{z},-\eps}$, the operator $\mathscr{L}_{z,\eps}$ is bijective and $\|\mathscr{L}^{-1}_{z,\eps}\|\leq \frac{\sqrt{2}}{\min(D_1(\varepsilon,z),D_2(\varepsilon,z))}$.
		
		Next, we have
		\begin{equation*}
		\|(\mathscr {L}+i)\mathds{1}_J(\mathscr{L})u\|\leq \sup_{J}|\ell+i|\|\mathds{1}_J(\mathscr{L})u\|\leq D^{-1}_1(\varepsilon,z) \sup_{J}|\ell+i|\|\mathscr{L}_{z,\eps}u\|\,,
		\end{equation*}
		from \eqref{eq.a}. Putting this together with \eqref{eq.b'} we obtain that
		\[\|\mathscr{L}_{z,\eps}u\|\geq \frac{1}{\sqrt{2}} \min\left(\frac{D_1(\varepsilon,z)}{\sup_{J}|\ell+i|},D_2(\varepsilon,z)\right)\|(\mathscr{L}+i)u\|\,.\]
		
		\item By combining the identity
		\[-\Im\langle \mathscr{L}_{z,\eps}u,u\rangle+\Im\langle \mathscr{L}_{z,\eps}u,\mathds{1}_{J^c}(\mathscr{L})u\rangle=-\Im\langle \mathscr{L}_{z,\eps}u,\mathds{1}_{J}(\mathscr{L})u\rangle\,,\]
		with \eqref{eq.ap3}, we get that 
		\begin{multline*}
		-\Im\langle \mathscr{L}_{z,\eps}u,u\rangle+\Im\langle \mathscr{L}_{z,\eps}u,\mathds{1}_{J^c}(\mathscr{L})u\rangle\\
		\geq \left(c_0\eps-\frac{(c_1\varepsilon)^2\sup_{J}|\ell+i|}{C(\varepsilon,z)}\right)\|\mathds{1}_J(\mathscr{L})u\|^2
		-C(\varepsilon,z)^{-1}c_1\eps\|\mathds{1}_{J}(\mathscr{L})u\|\|\mathscr{L}_{z,\eps} u\|\,,
		\end{multline*}
		and consequently
		\begin{multline*}
		|\langle \mathscr{L}_{z,\eps}u,u\rangle|+\|\mathscr{L}_{z,\eps}u\|\|\mathds{1}_{J^c}(\mathscr{L})u\|+C(\varepsilon,z)^{-1}c_1\eps\|\mathds{1}_{J}(\mathscr{L})u\|\|\mathscr{L}_{z,\eps} u\|\\
		\geq \left(c_0\eps-\frac{(c_1\varepsilon)^2\sup_{J}|\ell+i|}{C(\varepsilon,z)}\right)\|\mathds{1}_{J}(\mathscr{L})u\|^2\,.\end{multline*}
		From this, \eqref{eq.a} and \eqref{eq.b'}, it then follows that
		\begin{multline*}
		|\langle \mathscr{L}_{z,\eps}u,u\rangle|+\left(D_2(\varepsilon,z)^{-1}+D_1(\varepsilon,z)^{-1}C(\varepsilon,z) c_1\varepsilon \right)\|\mathscr{L}_{z,\eps}u\|^2\\
		\geq \left(c_0\eps-\frac{(c_1\varepsilon)^2\sup_{J}|\ell+i|}{C(\varepsilon,z)}\right)\|\mathds{1}_{J}(\mathscr{L})u\|^2\,.		
		\end{multline*}
		Next, with reference to \eqref{eq.b'} we may add $\|\mathscr{L}_{z,\eps}u\|^2$ on the left-hand-side of the above estimate and $D_2(\varepsilon,z)^2\|\mathds{1}_{J^c}(\mathscr{L})u\|^2$ on its right-hand-side. We obtain that
		\begin{multline*}
		|\langle \mathscr{L}_{z,\eps}u,u\rangle|+\left(1+D_2(\varepsilon,z)^{-1}+D_1(\varepsilon,z)^{-1}C(\varepsilon,z) c_1\varepsilon \right)\|\mathscr{L}_{z,\eps}u\|^2\\
		\geq \left(c_0\eps-\frac{(c_1\varepsilon)^2\sup_{J}|\ell+i|}{C(\varepsilon,z)}\right)\|\mathds{1}_{J}(\mathscr{L})u\|^2+D_2(\varepsilon,z)^2\|\mathds{1}_{J^c}(\mathscr{L})u\|^2\,.		
		\end{multline*}
		As a consequence we have for all $\eps\in(0,\eps_2]$,
		\begin{equation*}
		|\langle \mathscr{L}_{z,\eps}u,u\rangle|+\left(1+D_2(\varepsilon,z)^{-1}+D_1(\varepsilon,z)^{-1}C(\varepsilon,z) c_1\varepsilon \right)\|\mathscr{L}_{z,\eps}u\|^2\\
		\geq\frac{c_0\eps}{2}\|u\|^2\,.		
		\end{equation*}
		Bearing in mind that $C(\epsilon,z) \in (0,1)$, this entails that
		\begin{equation*}
		|\langle \mathscr{L}_{z,\eps}u,u\rangle|+\left(1+\frac{2\left(1+\frac{4c_1}{c_0}\sup_{J}|\ell+i|\right)}{\left(\frac{\sup_{B}|\ell+i|}{\mathrm{dist}(I,J^c)}+1\right)^{-1}}+\frac{4c_1}{c_0} \right)\|\mathscr{L}_{z,\eps}u\|^2\\
		\geq\frac{c_0\eps}{2}\|u\|^2\,,		
		\end{equation*}
		which yields the desired result. 
		
	\end{enumerate}
\end{proof}

Having established Lemma \ref{lem.B1}, we can now state the following technical result.

\begin{lemma}\label{lem.B2}
	For all bounded self-adjoint operator $\mathscr{C}$, we have
	\[\|\mathscr{L}_{z,\eps}^{-1}\mathscr{C}\|\leq \left(\frac{1}{\tilde c_0\eps}\right)^{\frac12}(\|\mathscr{C} \| +\|\mathscr{C}\mathscr{L}_{z,\eps}^{-1}\mathscr{C}\|^{\frac 12})\,\]
	and
	\[\|\mathscr{C}\mathscr{L}_{z,\eps}^{-1}\|\leq \left(\frac{1}{\tilde c_0\eps}\right)^{\frac12}(\|\mathscr{C} \|+\|\mathscr{C}\mathscr{L}_{z,\eps}^{-1}\mathscr{C}\|^{\frac 12})\,.\]
\end{lemma}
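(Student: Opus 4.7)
The plan is to deduce both inequalities from the coercivity estimate (e) of Lemma \ref{lem.B1}, by testing it against a suitable preimage under $\mathscr{L}_{z,\eps}$.

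For the first inequality, fix an arbitrary $\phi$ in the ambient Hilbert space and, using the bijectivity of $\mathscr{L}_{z,\eps}$ furnished by part (c) of Lemma \ref{lem.B1}, set $u:=\mathscr{L}_{z,\eps}^{-1}\mathscr{C}\phi\in\mathrm{Dom}(\mathscr{L})$. Then $\mathscr{L}_{z,\eps}u=\mathscr{C}\phi$, and substituting this $u$ into estimate (e) gives
\[
|\langle\mathscr{C}\phi,\mathscr{L}_{z,\eps}^{-1}\mathscr{C}\phi\rangle|+\|\mathscr{C}\phi\|^{2}\geq\tilde c_{0}\eps\,\|\mathscr{L}_{z,\eps}^{-1}\mathscr{C}\phi\|^{2}.
\]
The self-adjointness of $\mathscr{C}$ rewrites the first term as $|\langle\phi,\mathscr{C}\mathscr{L}_{z,\eps}^{-1}\mathscr{C}\phi\rangle|\leq\|\mathscr{C}\mathscr{L}_{z,\eps}^{-1}\mathscr{C}\|\,\|\phi\|^{2}$ by Cauchy-Schwarz, while $\|\mathscr{C}\phi\|^{2}\leq\|\mathscr{C}\|^{2}\|\phi\|^{2}$. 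Dividing by $\tilde c_{0}\eps$, taking square roots, applying $\sqrt{a+b}\leq\sqrt{a}+\sqrt{b}$, and passing to the supremum over unit $\phi$ yields the first inequality.

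For the second inequality, I would pass to the adjoint. Since $\mathscr{C}=\mathscr{C}^{*}$ and $\mathscr{L}_{z,\eps}^{*}=\mathscr{L}_{\bar z,-\eps}$, we have $\|\mathscr{C}\mathscr{L}_{z,\eps}^{-1}\|=\|\mathscr{L}_{\bar z,-\eps}^{-1}\mathscr{C}\|$ and $\|\mathscr{C}\mathscr{L}_{z,\eps}^{-1}\mathscr{C}\|=\|\mathscr{C}\mathscr{L}_{\bar z,-\eps}^{-1}\mathscr{C}\|$ by taking Hilbert-space adjoints. The coercivity estimate (e) extends to $\mathscr{L}_{\bar z,-\eps}$ with the same constant $\tilde c_{0}$ and the same admissible range of $\eps$: in the proof of (e), the contribution of $\Im z$ and of $\eps$ to $-\Im\langle\mathscr{L}_{z,\eps}u,\mathds{1}_{J}(\mathscr{L})u\rangle$ appears through the two terms $\Im z\,\|\mathds{1}_{J}(\mathscr{L})u\|^{2}$ and $\eps\,\langle\mathscr{B}\mathds{1}_{J}(\mathscr{L})u,\mathds{1}_{J}(\mathscr{L})u\rangle$, which flip sign simultaneously under $(z,\eps)\mapsto(\bar z,-\eps)$, and only their absolute values enter the final bound. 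Applying the first inequality to $\mathscr{L}_{\bar z,-\eps}$ therefore produces the second.

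The only non-routine point is the verification that (e) is stable under the adjoint map $(z,\eps)\mapsto(\bar z,-\eps)$ with unchanged constants; once this symmetry is recorded, everything else reduces to Cauchy-Schwarz and an application of the subadditivity of $\sqrt{\cdot}$.
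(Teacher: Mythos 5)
Your proof is correct and follows essentially the same route as the paper: substitute $u=\mathscr{L}_{z,\eps}^{-1}\mathscr{C}\phi$ into the coercivity estimate of Lemma~\ref{lem.B1}\eqref{eq.d} to get the first bound, then obtain the second by passing to the adjoint $\mathscr{C}\mathscr{L}_{z,\eps}^{-1}=(\mathscr{L}_{\overline{z},-\eps}^{-1}\mathscr{C})^{*}$. The paper's proof is a one-liner that silently assumes \eqref{eq.d} applies to $\mathscr{L}_{\overline{z},-\eps}$; your explicit check that the sign of $-\Im\langle\mathscr{L}_{z,\eps}u,\mathds{1}_J(\mathscr{L})u\rangle$ flips coherently with $(\Im z,\eps)\mapsto(-\Im z,-\eps)$, so that all constants survive, fills in the one detail the authors glossed over.
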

\begin{proof}
	Taking $u=\mathscr{L}_{z,\eps}^{-1}\mathscr{C}\varphi$ in \eqref{eq.d}, we obtain the first estimate. The second one follows from this and the fact that $\mathscr{C}\mathscr{L}_{z,\eps}^{-1}$ is the adjoint of $\mathscr{L}_{\overline{z},-\eps}^{-1}\mathscr{C}$.
\end{proof}
Armed with Lemma \ref{lem.B2}, we are in position to prove the main result of this section.

\begin{theorem}[Limiting Absorption Principle]\label{prop.B3}
Let $\mathscr{L}$ and $\mathscr{A}$ fulfill the conditions of Section \ref{sec-hyp}. Then, for any bounded self-adjoint operator $\mathscr{C}$ such that $\mathscr{C}\mathscr{A}$ and $\mathscr{A}\mathscr{C}$ are bounded, it holds true for all $\eps\in(0,\min(1,\eps_0)]$ that
\[ \sup_{\Im z> 0\,,\Re z\in I}\|\mathscr{C}(\mathscr{L}-z-i\eps \mathscr{B})^{-1}\mathscr{C}\| \le C, \]
	where
	\[ C:=C(\eps_0)+(K_1+K_2)(1+C(\eps_0)^{\frac12})\int_0^1\frac{\mathrm{d}t}{t^{\frac12}}+\sqrt{2}K^{\frac12}(K_1+K_2)\int_0^1\frac{|\ln(t)|^{\frac12}}{t^{\frac12}}\mathrm{d}t, \]
	\[K_1:=\frac{2}{\sqrt{\tilde c_0}}\max(\|\mathscr{C}\mathscr{A}\|,\|\mathscr{A}\mathscr{C}\|)\,,\quad K_2:=\frac{4 \sqrt{2} c_2\sup_{J}|\ell+i|}{c_0}\|\mathscr{C}\|\,,\]
	\[K:=\left(K_1+	K_2\right)\|\mathscr{C}\|\left(1+\frac{2\sup_{J}|\ell+i|^{\frac12}}{\sqrt{c_0}}\right)\,\]
	and $C(\eps_0)$ is a positive constant satisfying
	\[C(\eps_0)\leq \frac{4 \sqrt{2} \sup_{J}|\ell+i|}{c_0}\|\mathscr{C}\|^2\eps_0^{-1}\,.\]
	Moreover, we have
	\[\sup_{\Im z> 0\,,\Re z\in I}\|\mathscr{C}(\mathscr{L}-z)^{-1}\mathscr{C}\|\leq C\,.\]
\end{theorem}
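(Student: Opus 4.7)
The approach follows Mourre's classical strategy, adapted through the coercivity estimates of Lemmas~\ref{lem.B1} and~\ref{lem.B2}. Fixing $z$ with $\Re z\in I$ and $\Im z\geq 0$, I will study the operator-valued function
\[F(\eps):=\mathscr{C}\mathscr{L}_{z,\eps}^{-1}\mathscr{C}\,,\qquad \eps\in(0,\eps_0]\,,\]
which is continuously differentiable in $\eps$ since $(\mathscr{L}_{z,\eps})_\eps$ is analytic of type~(A). The plan is to derive a differential inequality for $\|F(\eps)\|$ and integrate it backward from $\eps_0$ to $\eps$, the initial value $\|F(\eps_0)\|$ being controlled by $C(\eps_0)$ via Lemma~\ref{lem.B1}\eqref{eq.c}.

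First I will compute $F'(\eps)$. Starting from $\tfrac{d}{d\eps}\mathscr{L}_{z,\eps}^{-1}=i\mathscr{L}_{z,\eps}^{-1}\mathscr{B}\mathscr{L}_{z,\eps}^{-1}$ together with $i\mathscr{B}=[\mathscr{A},\mathscr{L}]$ and the splitting $\mathscr{L}=\mathscr{L}_{z,\eps}+z+i\eps\mathscr{B}$, the inner commutator $[\mathscr{A},\mathscr{L}_{z,\eps}]$ is absorbed into the surrounding resolvents, producing
\[F'(\eps) = \mathscr{C}\mathscr{L}_{z,\eps}^{-1}\mathscr{A}\mathscr{C} - \mathscr{C}\mathscr{A}\mathscr{L}_{z,\eps}^{-1}\mathscr{C} + i\eps\,\mathscr{C}\mathscr{L}_{z,\eps}^{-1}[\mathscr{A},\mathscr{B}]\mathscr{L}_{z,\eps}^{-1}\mathscr{C}\,.\]

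Next I will estimate each term. For the first two, I factor $\mathscr{A}$ against $\mathscr{C}$ (so that $\mathscr{A}\mathscr{C}$ and $\mathscr{C}\mathscr{A}$ carry the $\mathscr{A}$-unboundedness) and apply Lemma~\ref{lem.B2} to $\mathscr{C}\mathscr{L}_{z,\eps}^{-1}$ and $\mathscr{L}_{z,\eps}^{-1}\mathscr{C}$, obtaining a $K_1\eps^{-1/2}\bigl(\|\mathscr{C}\|+\|F(\eps)\|^{1/2}\bigr)$ contribution. For the third, I rewrite $[\mathscr{A},\mathscr{B}]=-i[[\mathscr{L},\mathscr{A}],\mathscr{A}]$ so that hypothesis~\eqref{hyp.ii} yields $\|[\mathscr{A},\mathscr{B}](\mathscr{L}+i)^{-1}\|\leq c_2$, insert $(\mathscr{L}+i)^{-1}(\mathscr{L}+i)$ next to this commutator, and invoke the bound $\|(\mathscr{L}+i)\mathscr{L}_{z,\eps}^{-1}\|\leq D_3(\eps,z)^{-1}=\mathcal{O}(\eps^{-1})$ from Lemma~\ref{lem.B1}\eqref{eq.c}. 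The prefactor $\eps$ exactly cancels this $\eps^{-1}$ singularity, and one further application of Lemma~\ref{lem.B2} yields a $K_2\eps^{-1/2}\bigl(\|\mathscr{C}\|+\|F(\eps)\|^{1/2}\bigr)$ contribution. In total,
\[\|F'(\eps)\| \leq \frac{K_1+K_2}{\sqrt{\eps}}\bigl(\|\mathscr{C}\|+\|F(\eps)\|^{1/2}\bigr)\,.\]

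Finally I will integrate this from $\eps$ up to $\eps_0$. Setting $g(\eps):=\sup_{s\in[\eps,\eps_0]}\|F(s)\|$, a first integration produces a nonlinear Volterra inequality
\[g(\eps)\leq C(\eps_0)+(K_1+K_2)\int_\eps^{\eps_0}s^{-1/2}\bigl(\|\mathscr{C}\|+g(s)^{1/2}\bigr)\,\dx s\,,\]
whose affine part already accounts for the term $(K_1+K_2)(1+C(\eps_0)^{1/2})\int_0^1 t^{-1/2}\,\dx t$ appearing in~$C$. Solving the resulting quadratic inequality for $g(\eps)^{1/2}$, re-injecting the bound once more into the right-hand side, and applying Cauchy--Schwarz to the nested integral generates the logarithmic contribution $\sqrt{2}\,K^{1/2}(K_1+K_2)\int_0^1 |\ln t|^{1/2}\,t^{-1/2}\,\dx t$ with the stated constant $K$. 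I expect this nonlinear bootstrap with the weakly singular kernel $s^{-1/2}$ to be the main obstacle, since the constants must be tracked carefully to remain uniform in $z\in I\times(0,+\infty)$; fortunately every quantity that arises depends only on $c_0$, $c_1$, $c_2$, $\mathrm{dist}(I,J^c)$, and $\sup_J|\ell+i|$, so uniformity can be preserved. The last statement $\|\mathscr{C}(\mathscr{L}-z)^{-1}\mathscr{C}\|\leq C$ then follows by letting $\eps\to 0^+$: since $\mathscr{L}_{z,\eps}^{-1}\to(\mathscr{L}-z)^{-1}$ strongly on $\Dom(\mathscr{L})$ whenever $\Im z>0$, lower semi-continuity of the operator norm transfers the uniform bound to $\eps=0$.
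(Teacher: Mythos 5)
Your strategy is the same as the paper's: differentiate $F(\eps)=\mathscr{C}\mathscr{L}_{z,\eps}^{-1}\mathscr{C}$, use the commutator identity $[\mathscr{A},\mathscr{L}]=[\mathscr{A},\mathscr{L}_{z,\eps}]+i\eps[\mathscr{A},\mathscr{B}]$ to expose the bounded pieces $\mathscr{C}\mathscr{A}$, $\mathscr{A}\mathscr{C}$ and the double commutator, estimate via Lemma~\ref{lem.B2} and Lemma~\ref{lem.B1}\eqref{eq.c}, and integrate the resulting differential inequality up to $\eps_0$. Your identity for $F'(\eps)$ is correct (the paper has an inconsequential sign typo), and your $(K_1+K_2)\eps^{-1/2}\bigl(\|\mathscr{C}\|+\|F\|^{1/2}\bigr)$ bound is a slightly coarser but equivalent rewrite of the paper's $\bigl(K_1\eps^{-1/2}+K_2\bigr)\bigl(\|\mathscr{C}\|+\|F\|^{1/2}\bigr)$ for $\eps\leq1$.

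The genuine point of divergence is the final integration step, and your description there is internally inconsistent. The paper proceeds by a two-stage bootstrap whose first ingredient you never invoke: it first injects the \emph{crude coercivity bound} $\|F(\eps)\|\leq D_3(\eps,z)^{-1}\|\mathscr{C}\|^2\lesssim \eps^{-1}\|\mathscr{C}\|^2$ (coming from Lemma~\ref{lem.B1}\eqref{eq.c}) into the differential inequality, which upgrades $\|F'(\eps)\|$ to the bound $K\eps^{-1}$, integrates once to get $\|F(\eps)\|\leq K|\ln\eps|+\|F(\eps_0)\|$, and only then substitutes $\|F\|^{1/2}\leq K^{1/2}|\ln\eps|^{1/2}+\|F(\eps_0)\|^{1/2}$ back into the differential inequality and integrates again. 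This is precisely where the $|\ln t|^{1/2}t^{-1/2}$ integral in $C$ comes from. Your alternative — passing to $g(\eps)=\sup_{[\eps,\eps_0]}\|F\|$, using monotonicity of $g$ to pull $g(\eps)^{1/2}$ out of the Volterra integral, and solving the quadratic in $g(\eps)^{1/2}$ — would in fact give a \emph{finite, $\eps$-independent bound with no logarithm at all}, so the subsequent claim that "applying Cauchy--Schwarz to the nested integral generates the logarithmic contribution" does not follow from your setup; you appear to be grafting the shape of the paper's constant onto a route that would never produce it. To match the stated $C$ (with the $|\ln t|^{1/2}$ term and the constant $K$ featuring $\sup_J|\ell+i|^{1/2}/\sqrt{c_0}$), you must bring in the preliminary $\mathcal{O}(\eps^{-1})$ bound explicitly and run the paper's two-pass argument; alternatively, your quadratic-inequality route, carried through coherently, is a legitimate (and arguably cleaner) proof of the limiting absorption estimate, but it proves a bound with a different explicit constant than the one stated in the theorem, so it does not reproduce the statement as written.

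A small additional remark: for the final passage $\eps\downarrow 0$ when $\Im z>0$, the convergence $\mathscr{L}_{z,\eps}^{-1}\to(\mathscr{L}-z)^{-1}$ actually holds in operator norm (the family is type (A) analytic and both resolvents exist near $\eps=0$), which is slightly stronger than the strong convergence plus lower semi-continuity you invoke; either suffices.
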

\begin{proof}
	Let us differentiate $F(\eps):=\mathscr{C}\mathscr{L}_{z,\eps}^{-1}\mathscr{C}$ w.r.t. $\eps$. We obtain that
	\[\begin{split}F'(\eps)&=\mathscr{C}\mathscr{L}_{z,\eps}^{-1}[\mathscr{L},\mathscr{A}]\mathscr{L}_{z,\eps}^{-1}\mathscr{C}\\
	&=\mathscr{C}\mathscr{L}_{z,\eps}^{-1}[\mathscr{L}_{z,\eps},\mathscr{A}]\mathscr{L}_{z,\eps}^{-1}\mathscr{C}-\eps \mathscr{C}\mathscr{L}_{z,\eps}^{-1}[[\mathscr{L},\mathscr{A}],\mathscr{A}]\mathscr{L}_{z,\eps}^{-1}\mathscr{C}\\
	&= \mathscr{C}\mathscr{A} \mathscr{L}_{z,\eps}^{-1}\mathscr{C} - \mathscr{C}\mathscr{L}_{z,\eps}^{-1} \mathscr{A}\mathscr{C}-\eps \mathscr{C}\mathscr{L}_{z,\eps}^{-1}[[\mathscr{L},\mathscr{A}],\mathscr{A}]\mathscr{L}_{z,\eps}^{-1}\mathscr{C}.\end{split}\]
	Bearing in mind that $\mathscr{A}\mathscr{C}$ and $\mathscr{C}\mathscr{A}$ are bounded, we refer to \eqref{hyp.ii}, \eqref{eq.c} and Lemma \ref{lem.B2}, and deduce from the above estimate that
	\begin{equation} \label{ineq.diff}
	\begin{split}\|F'(\eps)\|&\leq \|\mathscr{C}\mathscr{A}\|\|\mathscr{L}^{-1}_{z,\eps}\mathscr{C}\|+\|\mathscr{A}\mathscr{C}\|\|\mathscr{C}\mathscr{L}^{-1}_{z,\eps}\|+c_2\eps\|\mathscr{C}\|\|\mathscr{C}\mathscr{L}^{-1}_{z,\eps}\|\|(\mathscr{L}+i)\mathscr{L}^{-1}_{z,\eps}\|\\
	&	\leq \left(\frac{2}{\sqrt{\tilde c_0\eps}}\max(\|\mathscr{C}\mathscr{A}\|,\|\mathscr{A}\mathscr{C}\|)+c_2\eps\|\mathscr{C}\|D_3(\eps,z)^{-1}\right)(\|\mathscr{C}\|+\|F\|^{\frac 12})\\
	&\leq \left(K_1\eps^{-\frac12}+	K_2\right)(\|\mathscr{C}\|+\|F\|^{\frac 12})\,. 
	\end{split}
	\end{equation}
	Further, since
	$\|F(\eps)\|\leq\|\mathscr{C}\|^2D_3(\eps,z)^{-1}\leq \frac{4\sup_{J}|\ell+i|}{c_0}\|\mathscr{C}\|^2\eps^{-1}$ for all $\eps\in(0,\eps_0]$, by \eqref{eq.c}, we infer from \eqref{ineq.diff} upon possibly substituting $1$ for $\eps_0$, that
		\[\begin{split}
	\|F'(\eps)\|&\leq\left(K_1\eps^{-\frac12}+	K_2\right)\|\mathscr{C}\|\left(1+\eps^{-\frac12}\frac{2\sup_{J}|\ell+i|^{\frac12}}{\sqrt{c_0}}\right)\\
	&\leq \left(K_1+	K_2\right)\|\mathscr{C}\|\left(1+\frac{2\sup_{J}|\ell+i|^{\frac12}}{\sqrt{c_0}}\right)\eps^{-1}.
	\end{split}
	\]
	Integrating the above estimate over $(\eps,\eps_0)$ then yields
\[\|F(\eps)\|\leq K|\ln(\eps)|+\|F(\eps_0)\|\,, \eps \in (0,\eps_0]. \]
	Plugging this into \eqref{ineq.diff}, we obtain that
	\[\|F'(\eps)\|\leq (K_1+K_2)\eps^{-\frac12}\left(\|\mathscr{C}\|+K^{\frac12}|\ln(\eps)|^{\frac12}+\|F(\eps_0)\|^{\frac12}\right)\,,\]
	which, upon integrating over $(\eps,\eps_0)$, leads to
	\[ \| F(\eps) \| \le  \|F(\eps_0) \| + (K_1+K_2) \left( \left( \|\mathscr{C}\| + \|F(\eps_0)\|^{\frac12} \right) \int_0^1 \frac{\mathrm{d}t}{t^{\frac12}} +K^{\frac12} \int_0^1 \frac{|\ln t|^{\frac12}}{t^{\frac12}} \mathrm{d} t \right). \]
	This and the estimate
	\begin{equation*}
	\begin{split}
	\|F(\eps_0) \| & = \| \mathscr{C} (\mathscr{L}+i)^{-1} (\mathscr{L}+i) \mathscr{L}^{-1}_{z,\eps_0} \mathscr{C} \| \\
	& \le \| (\mathscr{L}+i) \mathscr{L}^{-1}_{z,\eps_0} \| \| \mathscr{C} \|^2\\
	& \le \frac{4\sqrt{2}\sup_{J}|\ell+i|}{c_0} \| \mathscr{C} \|^2 \eps_0^{-1},
	\end{split}
	\end{equation*}
	arising from \eqref{eq.c}, yield the desired result with $C(\eps_0)=\|F(\eps_0) \|$.
	
\end{proof}

\section{Mourre estimates and limiting absorption}\label{sec.Mourre}
\subsection{A Mourre estimate for the unperturbed operator}

Since $V(x,\cdot,h) \in S_{\frac\alpha 2}(\langle y\rangle^{-1})$ and $\mathscr{A}_h\in S_{\frac\alpha 2}(\langle y\rangle)$, the following lemma is a direct consequence of the composition theorem of pseudo-differential operators (see \cite[Theorem 4.18]{Z13}) and the Calderon-Vaillancourt theorem (see \cite[Theorem 4.23]{Z13}).

\begin{lemma}\label{lem.commutVA}
	The pseudo-differential operator $[V(x,\cdot,h),\mathscr{A}_h]$ is bounded from $L^2(\R_y)$ to $L^2(\R_y)$. More precisely, there exist $C>0$ and $h_0>0$ such that for all $h\in(0,h_0)$ and all $x>0$, we have
	\[\|[V(x,\cdot,h),\mathscr{A}_h]\|_{\mathscr{L}^2(L^2(\R_y))}\leq Ch^{1-\alpha }\,.\]	
	In particular, this operator extends to a bounded operator on $L^2(\R_+^2)$.
\end{lemma}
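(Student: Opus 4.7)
The plan is to realize $[V(x,\cdot,h),\mathscr{A}_h]$ as a semiclassical Weyl pseudo-differential operator on $L^2(\R_y)$ and to estimate its operator norm via the Calder\'on-Vaillancourt theorem. For each fixed $x>0$ and $h>0$, the operator $V(x,\cdot,h)$ is the Weyl quantization of the symbol $(y,\eta)\mapsto V(x,y,h)$, which lies in $S_{\alpha/2}(\langle y\rangle^{-1})$ with seminorms uniform in $(x,h)$, while $\mathscr{A}_h=\mathrm{Op}^w_h(a)$ has symbol $a(y,\eta):=y f_h(\eta)\in S_{\alpha/2}(\langle y\rangle)$. The composition theorem \cite[Theorem 4.18]{Z13} then ensures that $[V(x,\cdot,h),\mathscr{A}_h]=\mathrm{Op}^w_h(c_h)$ for a symbol $c_h=V\#a-a\#V$ coming from the Moyal product.

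Because the Weyl commutator only keeps odd powers of $h$ in the Moyal expansion, the symbol splits as $c_h=\frac{h}{i}\{V,a\}+r_h$, where $\{V,a\}$ is the Poisson bracket. Since $V$ does not depend on $\eta$, this simplifies to
\[\{V,a\}(x,y,\eta)=-\partial_y V(x,y,h)\,\partial_\eta a(y,\eta)=-y f_h'(\eta)\,\partial_y V(x,y,h).\]
From the definition of $S_{\alpha/2}$, we have $|\partial_y V|\le C h^{-\alpha/2}\langle y\rangle^{-1}$ and $|f_h'|\le C h^{-\alpha/2}$; the crucial cancellation $|y|\langle y\rangle^{-1}\le 1$ then yields $|\{V,a\}|\le C h^{-\alpha}$ uniformly in $x$. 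The same bookkeeping on higher derivatives (each one costing at most a factor $h^{-\alpha/2}$) places $\{V,a\}$ in $S_{\alpha/2}(1)$ with seminorms of order $h^{-\alpha}$, so that the principal part of $c_h$ belongs to $S_{\alpha/2}(1)$ with seminorms of order $h^{1-\alpha}$.

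For the remainder $r_h$, I observe that each term of the Moyal expansion pairs a $y$-derivative of one factor with an $\eta$-derivative of the other. Since $\partial_\eta V\equiv 0$, only the terms with $\partial_y^k V$ paired with $\partial_\eta^k a=y f_h^{(k)}(\eta)$ survive. Counting $k$ derivatives of each kind yields a contribution of size $h^k\cdot h^{-k\alpha/2}\cdot h^{-k\alpha/2}\cdot|y|\langle y\rangle^{-1}=O(h^{k(1-\alpha)})$, which for $k\ge 3$ and $\alpha<1$ is $o(h^{1-\alpha})$. Hence $c_h\in S_{\alpha/2}(1)$ with seminorms of size $h^{1-\alpha}$, and the Calder\'on-Vaillancourt theorem \cite[Theorem 4.23]{Z13} produces a constant $C>0$ such that $\|[V(x,\cdot,h),\mathscr{A}_h]\|_{\mathscr{L}^2(L^2(\R_y))}\le C h^{1-\alpha}$ uniformly in $x>0$. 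The main delicate point is to track the two competing weights $\langle y\rangle$ (from $a$) and $\langle y\rangle^{-1}$ (from $V$) so that their product reduces to the flat weight $1$ at each order of the expansion, which is precisely what allows Calder\'on-Vaillancourt to be applied.
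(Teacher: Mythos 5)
Your proof is correct and follows essentially the same route as the paper, which disposes of the lemma in a single sentence invoking the composition theorem and Calder\'on--Vaillancourt. You have in fact supplied the details that the paper leaves implicit: the cancellation of the $k=0$ Moyal term in the commutator so that the leading contribution is the Poisson bracket of size $h\cdot h^{-\alpha}$, the weight cancellation $\langle y\rangle^{-1}\cdot\langle y\rangle=1$ placing the composed symbol in $S_{\alpha/2}(1)$, and the bound $O(h^{k(1-\alpha)})$ on the odd $k\geq 3$ remainder terms, after which Calder\'on--Vaillancourt (valid since $\alpha/2<1/2$) gives the operator norm bound uniformly in $x$.
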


\begin{remark}
The expression of the above commutator can be explicitly calculated at the cost of rather tedious computations, but it is not needed thanks to the pseudo-differential approach.
\end{remark}

\begin{lemma}\label{lem.comutLA}
	The following commutator is well defined on $\mathrm{Dom}(\mathscr{L}_h)$ and
	\[[\mathscr{L}_{h},i\mathscr{A}_h]=4h(hD_y-x)f(hD_y)\,.\]
\end{lemma}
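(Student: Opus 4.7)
This will be a direct algebraic computation. Setting $P := hD_y - x$, so that $\mathscr{L}_h = D_x^2 + P^2$, the key observations are the following: $D_x^2$ commutes with both $y$ and $f(hD_y)$, because neither touches the $x$-variable, so $[D_x^2, \mathscr{A}_h] = 0$; $P$ commutes with $f(hD_y)$, because $f(hD_y)$ is a bounded function of $D_y$ alone and $x$ commutes with $D_y$; and $[P, y] = h[D_y, y] = -ih$, the canonical commutation relation.

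Using the last relation and the Leibniz rule, one gets $[P^2, y] = P[P, y] + [P, y]P = -2ihP$. Combining this with $[P, f(hD_y)] = 0$ yields
\[
[P^2, \mathscr{A}_h] = [P^2, y]f(hD_y) + f(hD_y)[P^2, y] = -2ih\bigl(Pf(hD_y) + f(hD_y)P\bigr) = -4ih\, Pf(hD_y).
\]
Adding $[D_x^2, \mathscr{A}_h] = 0$ and multiplying by $i$ gives $[\mathscr{L}_h, i\mathscr{A}_h] = 4h(hD_y - x)f(hD_y)$, which is the claimed identity.

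The only subtlety is to justify that this formal manipulation is meaningful on $\mathrm{Dom}(\mathscr{L}_h)$, since $y\psi$ need not lie in $L^2$ for a generic $\psi \in \mathrm{Dom}(\mathscr{L}_h)$. The plan is to carry out the manipulation first on a dense core such as $\mathscr{S}(\overline{\R_+^2})$, where every step is literal, and then to extend by continuity. For the extension it is enough to observe that $f(hD_y)$ commutes with $\mathscr{L}_h$ (hence preserves $\mathrm{Dom}(\mathscr{L}_h)$) and that $(hD_y-x)\psi \in L^2$ for every $\psi \in \mathrm{Dom}(\mathscr{L}_h)$, so that $(hD_y - x)f(hD_y)\psi = f(hD_y)(hD_y - x)\psi$ is a well-defined element of $L^2$. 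The main obstacle here is essentially bookkeeping of unbounded operators against the bounded factor $f(hD_y)$; the underlying algebra is immediate.
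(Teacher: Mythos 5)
Your proof is correct and follows essentially the same route as the paper: both reduce everything to the canonical commutator $[(hD_y-x)^2,y]=-2ih(hD_y-x)$ together with the fact that $f(hD_y)$ commutes with $hD_y-x$, then collect the two symmetric terms coming from $\mathscr{A}_h=yf(hD_y)+f(hD_y)y$. You add a short discussion of the domain bookkeeping (core, boundedness of $f(hD_y)$, and $(hD_y-x)\psi\in L^2$), which the paper leaves implicit.
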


\begin{proof}
	Since
	\begin{equation}\label{eq.commutator}
	\begin{split}
	[\mathscr{L}_{h},i\mathscr{A}_h]&=[(hD_y-x)^2,i\mathscr{A}_h]=i[(hD_y-x)^2,yf(hD_y)+f(hD_y)y]\\
	&=i[(hD_y-x)^2,y]f(hD_y)+ihf(hD_y)[(hD_y-x)^2,y],\\
	\end{split}	
	\end{equation}
we get the desired result upon recalling that $[(hD_y-x)^2,y]=-2ih(hD_y-x)$.
\end{proof}

For $d:=bh^\alpha$, put $J:=[e-d,e+d]$. Since $\beta > 2 \alpha$, we have $0<d<\delta$ whenever $h \in (0,1)$, and consequently $I$ is a proper subset of $J$: $\overline{I} \subset J$.

\begin{proposition} \label{pr-MEunperturbed}
	There exist $\tilde{c}_0>0$ and $h_0>0$ such that for all $h\in(0,h_0)$, we have
	\[\mathds{1}_{J}(\mathscr{L}_{h})[\mathscr{L}_{h},i\mathscr{A}_h]\mathds{1}_{J}(\mathscr{L}_{h})\geq \tilde{c}_0 h^{1+\alpha}>0\,.\]	
\end{proposition}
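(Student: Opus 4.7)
My plan is to use the fiber decomposition provided by $\mathscr{F}_h$, under which both $\mathscr{L}_h$ and the multiplier $f(hD_y)$ become diagonal in $\xi$, and to reduce the commutator expectation to a one-dimensional integral to which the Feynman--Hellmann formula applies. By Lemma~\ref{lem.comutLA}, one has $[\mathscr{L}_{h},i\mathscr{A}_h]=4h(hD_y-x)f(hD_y)$, so conjugation by $\mathscr{F}_h$ transforms this commutator into the multiplier $(x,\xi)\mapsto 4hf(\xi)(\xi-x)$ acting fiberwise on $L^{2}(\R_+,\mathrm{d}x)$ (formally self-adjoint, since $x$ and $hD_y$ commute).

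Next, I exploit that $\mathds{1}_J(\mathscr{L}_h)$ projects onto the first band only. Since $e=\Theta_0+ah^{\alpha}$ and $d=bh^{\alpha}\to 0$, for $h$ small $J\subset(\Theta_0,1)$, so Proposition~\ref{prop.rappel} gives $\mu_j^{-1}(J)=\emptyset$ for every $j\geq 2$. Arguing exactly as in Lemma~\ref{lem.spec-loc}, any $\phi\in\mathrm{range}\,\mathds{1}_J(\mathscr{L}_h)$ admits the representation
\[\mathscr{F}_h\phi(x,\xi)=c(\xi)u_{1}(x,\xi)\mathds{1}_{J}(\mu_{1}(\xi)),\qquad c(\xi):=\langle \mathscr{F}_h\phi(\cdot,\xi),u_{1}(\cdot,\xi)\rangle_{L^{2}(\R_+)},\]
and $\|\phi\|^{2}=\int_{\R}\mathds{1}_{J}(\mu_{1}(\xi))|c(\xi)|^{2}\,\mathrm{d}\xi$. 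Combining this with the Feynman--Hellmann identity $\mu_{1}'(\xi)=2\int_{\R_{+}}(\xi-x)|u_{1}(x,\xi)|^{2}\,\mathrm{d}x$ via Fubini, exactly as in the proof of Lemma~\ref{lem.pars}, yields the scalar formula
\[\langle [\mathscr{L}_{h},i\mathscr{A}_h]\phi,\phi\rangle=2h\int_{\R}f(\xi)\mu_{1}'(\xi)\mathds{1}_{J}(\mu_{1}(\xi))|c(\xi)|^{2}\,\mathrm{d}\xi.\]

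The core of the proof is then the pointwise lower bound $f(\xi)\mu_{1}'(\xi)\geq c_\star h^{\alpha/2}$ on $\mu_{1}^{-1}(J)$. Since $d=bh^\alpha\leq h^\alpha$ in the regime $b\leq 1$ (the only interesting one here), $\mu_{1}^{-1}(J)$ is contained in the plateau where, by construction, $f=-1$ on the decreasing branch $\xi<\xi_0$ and $f=+1$ on the increasing branch $\xi>\xi_0$; hence $f\mu_{1}'=|\mu_{1}'|$ on $\mu_{1}^{-1}(J)$. By analyticity and the non-degeneracy $\mu_{1}''(\xi_0)>0$ (Proposition~\ref{prop.rappel}),
\[\mu_{1}(\xi)-\Theta_0=\frac{\mu_{1}''(\xi_0)}{2}(\xi-\xi_0)^2+O((\xi-\xi_0)^{3}),\]
so solving $\mu_{1}(\xi)=e-d=\Theta_0+(a-b)h^{\alpha}$ yields $|\xi-\xi_0|\sim\sqrt{2(a-b)/\mu_{1}''(\xi_0)}\,h^{\alpha/2}$, and differentiating gives $|\mu_{1}'(\xi)|\geq\sqrt{2(a-b)\mu_{1}''(\xi_0)}\,h^{\alpha/2}(1+o(1))$ uniformly on $\mu_{1}^{-1}(J)$, where the strict inequality $a>b$ is essential. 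Injecting this into the integral formula produces $\langle [\mathscr{L}_{h},i\mathscr{A}_h]\phi,\phi\rangle\geq 2c_\star h^{1+\alpha/2}\|\phi\|^{2}\geq \tilde c_0 h^{1+\alpha}\|\phi\|^{2}$ for all $h\in(0,h_0]$ with any $h_0\leq 1$, since $h^{1+\alpha/2}\geq h^{1+\alpha}$ whenever $h\leq 1$ and $\alpha\geq 0$. The main technical obstacle is precisely this quantitative lower bound on $|\mu_1'|$: because $e$ is at spectral distance only $h^\alpha$ from the edge $\Theta_0$ where $\mu_1'$ vanishes, one cannot beat the scale $h^{\alpha/2}$, and this loss is what ultimately dictates the exponents $h^{1+\alpha}$ that propagate through the Mourre estimate.
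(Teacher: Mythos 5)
Your argument follows the paper's proof essentially verbatim: conjugate by $\mathscr{F}_h$, use the single-band reduction of Lemma~\ref{lem.spec-loc} and the Feynman--Hellmann identity as in Lemma~\ref{lem.pars} to arrive at $\langle[\mathscr{L}_{h},i\mathscr{A}_h]\phi,\phi\rangle=2h\int f(\xi)\mu_1'(\xi)\mathds{1}_J(\mu_1(\xi))|c(\xi)|^2\,\mathrm{d}\xi$, observe that $f\mu_1'=|\mu_1'|$ on $\mu_1^{-1}(J)$ by construction of $f$, and bound $|\mu_1'|$ from below on $\mu_1^{-1}(J)$ using the non-degeneracy $\mu_1''(\xi_0)>0$. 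Your explicit computation showing that the quadratic minimum actually delivers the sharper scale $h^{\alpha/2}$ (hence the stronger bound $h^{1+\alpha/2}\geq h^{1+\alpha}$) is a correct and slightly more quantitative rendering of the paper's appeal to ``the quadratic behavior of $\mu_1$ at its minimum.''
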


\begin{proof}
	By the Parseval formula, we get from Lemma \ref{lem.comutLA} that for all $u\in \mathrm{Ran}(\mathds{1}_J(\mathscr{L}_h))$,
	\[\langle[\mathscr{L}_{h},i\mathscr{A}_h]u,u\rangle_{L^2(\R_+\times\R)}=4h\langle(\eta-x)f(\eta)\mathscr{F}_h u,\mathscr{F}_h u\rangle_{L^2(\R_+\times\R_\eta)}\,.\]
	Moreover, analogously to Lemma \ref{lem.spec-loc} we have
	\[\mathscr{F}_h u(x,\eta)=\mathds{1}_{\mu_1(\eta)\in J}U(\eta) u_1(x,\eta)\,,\]
	where
	\[U(\eta)=\langle\mathscr{F}_h u (\cdot,\eta),u_1(\cdot,\eta)\rangle_{L^2(\mathbb{R}_+)}\,,\]
	in such a way that
	\[\langle[\mathscr{L}_{h},i\mathscr{A}_h]u,u\rangle_{L^2(\R_+\times\R)}=4h\int_{\mathbb{R}_+}\int_{\R}(\eta-x)f(\eta)\mathds{1}_{\mu_1(\eta)\in J}|U(\eta)|^2| u_1(x,\eta)|^2\mathrm{d}\eta\mathrm{d}x\,.\]
	Therefore, applying the Feynman-Hellmann formula, we obtain that
	\[\langle[\mathscr{L}_{h},i\mathscr{A}_h]u,u\rangle_{L^2(\R_+\times\R)}=2h\int_{\mu_1(\eta)\in J}f(\eta)\mu'_1(\eta)| U(\eta)|^2\mathrm{d}\eta\,.
	\]
	Moreover, we have
	$\int_{\mu_1(\eta)\in J}f(\eta)\mu'_1(\eta)| U(\eta)|^2\mathrm{d}\eta\geq\int_{\mu_1(\eta)\in J}|\mu'_1(\eta)|| U(\eta)|^2\mathrm{d}\eta$ from the definition of $f$, and consequently 
		\[\langle[\mathscr{L}_{h},i\mathscr{A}_h]u,u\rangle_{L^2(\R_+\times\R)}\geq \tilde{c}_0 h^{1+\alpha}\int_{\mu_1(\eta)\in J}|U(\eta)|^2\mathrm{d}\eta \geq \tilde{c}_0 h^{1+\alpha}\|u\|^2\,,\]
		from the quadratic behavior of $\mu_1$ at its minimum, expressed in Proposition \ref{prop.rappel}.iii. 

\end{proof}

One of the benefits of a Mourre estimate is its stability under perturbation. Having established a Mourre inequality for the unperturbed operator $\mathscr{L}_{h}$ in the above proposition, we turn now to extending this estimate to the case of the perturbed operator $\mathscr{L}_{h}+V$.

\subsection{The case of the perturbed operator}
Since the proof of a Mourre estimate for the perturbed operator essentially boils down to the existence of a Mourre inequality for unperturbed operator, we preliminarily establish the following property of the spectral decomposition associated with $\mathscr{L}_h$.

\subsubsection{Spectral decomposition associated with $\mathscr{L}_h$}
\begin{lemma}\label{lem.orth.dec}
	Let $\phi\in\mathrm{range}\,\mathds{1}_I(\mathscr{L}_{h,V})$. Then, $\phi$ decomposes as
	\[\phi=\phi_1+\phi_2\,,\qquad \phi_1=\mathds{1}_J(\mathscr{L}_h)\phi\,,\quad \phi_2=\mathds{1}_{\R\setminus J}(\mathscr{L}_h)\phi\,,\]
	and we have
	\[\|\phi_2\|\leq c_h \|\phi\|\,,\quad c_h =d^{-1}(\delta+h^\gamma\|V\|_\infty)\,.\]
\end{lemma}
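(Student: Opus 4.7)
The plan is to exploit the spectral localization of $\phi$ for the perturbed operator $\mathscr{L}_{h,V}$ and transfer it, up to an error controlled by the perturbation $h^\gamma V$, to the spectral localization for the unperturbed operator $\mathscr{L}_h$.

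First I would note that since $\phi \in \mathrm{range}\,\mathds{1}_I(\mathscr{L}_{h,V})$ with $I = [e-\delta, e+\delta]$, the functional calculus for the self-adjoint operator $\mathscr{L}_{h,V}$ yields directly
\[
\|(\mathscr{L}_{h,V} - e)\phi\| = \|(\mathscr{L}_{h,V}-e)\mathds{1}_I(\mathscr{L}_{h,V})\phi\| \leq \delta\,\|\phi\|.
\]
Next, writing $\mathscr{L}_{h,V} = \mathscr{L}_h + h^\gamma V$ and using the triangle inequality together with the bound $\|V(\cdot,\cdot,h)\|_\infty < \infty$ (guaranteed by the assumption $V(x,\cdot,h)\in S(\langle y\rangle^{-1})$ uniformly in $(x,h)$, which implies uniform boundedness), I would obtain
\[
\|(\mathscr{L}_h - e)\phi\| \leq \|(\mathscr{L}_{h,V} - e)\phi\| + h^\gamma\|V\|_\infty\,\|\phi\| \leq (\delta + h^\gamma\|V\|_\infty)\,\|\phi\|.
\]

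Then the key step is to apply the spectral projection $\mathds{1}_{\R\setminus J}(\mathscr{L}_h)$, which commutes with $\mathscr{L}_h$ (and hence with $\mathscr{L}_h - e$), to transfer this global estimate into a lower bound on $\phi_2$. Since $J = [e-d, e+d]$ and the function $\lambda \mapsto (\lambda - e)\mathds{1}_{\R\setminus J}(\lambda)$ satisfies $|\lambda - e| \geq d$ on its support, the functional calculus gives
\[
\|(\mathscr{L}_h - e)\phi_2\| = \|\mathds{1}_{\R\setminus J}(\mathscr{L}_h)(\mathscr{L}_h - e)\phi\| \geq d\,\|\phi_2\|.
\]
Combining the previous two estimates with the fact that orthogonal spectral projections are norm-contractive (so $\|\mathds{1}_{\R\setminus J}(\mathscr{L}_h)(\mathscr{L}_h-e)\phi\| \leq \|(\mathscr{L}_h-e)\phi\|$), I conclude
\[
d\,\|\phi_2\| \leq (\delta + h^\gamma\|V\|_\infty)\,\|\phi\|,
\]
which yields the desired bound after dividing by $d$.

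There is essentially no obstacle here: the argument is pure functional calculus. The only subtlety is the conceptual one, namely recognizing that the relevant perturbative parameter governing the leakage $\phi_2$ out of the spectral window $J$ for $\mathscr{L}_h$ is the combined scale $\delta + h^\gamma\|V\|_\infty$ (the width of $I$ plus the size of the perturbation), divided by the "safety margin" $d = bh^\alpha$ that separates the smaller window $I$ from the complement of the larger window $J$. The hypothesis $\beta > 2\alpha$ ensures $d > \delta$ for small $h$, while $\gamma \geq \beta > 2\alpha$ makes $h^\gamma \ll d$, so that $c_h$ is indeed small and the lemma provides an effective smallness statement to be used in the subsequent Mourre argument.
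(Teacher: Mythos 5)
Your proof is correct and takes essentially the same route as the paper: both reduce to the functional-calculus estimate $\|(\mathscr{L}_{h,V}-e)\phi\|\leq\delta\|\phi\|$, substitute $\mathscr{L}_h=\mathscr{L}_{h,V}-h^\gamma V$, and use that $|\lambda-e|\geq d$ on $\R\setminus J$ to invert $\mathscr{L}_h-e$ on the range of $\mathds{1}_{\R\setminus J}(\mathscr{L}_h)$. The only cosmetic difference is that you bound $\|(\mathscr{L}_h-e)\phi\|$ globally before projecting onto $\R\setminus J$, whereas the paper inserts the projection and the inverse first and then expands $\mathscr{L}_h-e$; the two orderings give the identical chain of inequalities.
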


\begin{proof}
	Since $\phi_2=\mathds{1}_{\R\setminus J}(\mathscr{L}_h)\mathds{1}_I(\mathscr{L}_{h,V})\phi$, we have
	\[\begin{split}
	\phi_2&=(\mathscr{L}_h-e)^{-1}\mathds{1}_{\R\setminus J}(\mathscr{L}_h)(\mathscr{L}_h-e)\mathds{1}_I(\mathscr{L}_{h,V})\phi\\
	&=(\mathscr{L}_h-e)^{-1}\mathds{1}_{\R\setminus J}(\mathscr{L}_h)(\mathscr{L}_{h,V}-e-h^\gamma V)\mathds{1}_I(\mathscr{L}_{h,V})\phi\,,
	\end{split}
	\]
	which immediately yields that $\|\phi_2\|\leq d^{-1}(\delta+h^\gamma\|V\|_\infty)\|\phi\|$.
\end{proof}

\subsubsection{Proof of Theorem \ref{theo.Mourre}}
For all $\phi\in\mathrm{range}\, \mathds{1}_I(\mathscr{L}_{h,V})$, we have
\begin{equation}\label{eq.Qh} 
\langle[\mathscr{L}_{h,V},i\mathscr{A}_h]\phi,\phi \rangle=\langle[\mathscr{L}_{h},i\mathscr{A}_h]\phi,\phi \rangle+h^\gamma\langle[V,i\mathscr{A}_h]\phi,\phi \rangle\,.
\end{equation}
The second term on the right-hand-side of the above line can be estimated with the aid of
Lemma \ref{lem.commutVA}: since $\gamma-\alpha+1>1+\alpha$, we get that
\begin{equation}\label{eq.pt}
h^\gamma\left|\langle[V,i\mathscr{A}_h]\phi,\phi \rangle\right|\leq h^\gamma\|[V,i\mathscr{A}_h]\|_{\mathscr{L}(L^2(\R^2_+))}\|\phi\|^2\leq Ch^{\gamma-\alpha+1}\|\phi\|^2\leq \frac{\tilde{c}_0 }{4}h^{1+\alpha}\|\phi\|^2\,,
\end{equation}
provided that $h$ is small enough.
In the first term on right-hand-side of \eqref{eq.Qh}, the commutator $[\mathscr{L}_{h},i\mathscr{A}_h]$ is acting on the state $\phi\in\mathrm{range}\,\mathds{1}_I(\mathscr{L}_{h,V})$, which decomposes according to Lemma \ref{lem.orth.dec}, giving:
\[\langle[\mathscr{L}_{h},i\mathscr{A}_h]\phi,\phi \rangle=\langle[\mathscr{L}_{h},i\mathscr{A}_h]\phi_1,\phi_1 \rangle+\langle[\mathscr{L}_{h},i\mathscr{A}_h]\phi_2,\phi_2 \rangle+2\mathrm{Re}\langle[\mathscr{L}_{h},i\mathscr{A}_h]\phi_1,\phi_2 \rangle\,.\]
Therefore, we have
\[\langle[\mathscr{L}_{h},i\mathscr{A}_h]\phi,\phi \rangle\geq \tilde{c}_0 h^{1+\alpha}\|\phi_1\|^2-|\langle[\mathscr{L}_{h},i\mathscr{A}_h]\phi_2,\phi_2 \rangle|-2|\langle[\mathscr{L}_{h},i\mathscr{A}_h]\phi_1,\phi_2 \rangle|\,,\]
by Proposition \ref{pr-MEunperturbed}, and hence
\begin{multline*}
\langle[\mathscr{L}_{h},i\mathscr{A}_h]\phi,\phi \rangle\geq \tilde{c}_0 h^{1+\alpha}\|\phi_1\|^2-4h\|f\|_\infty|\|\phi_2\|\|(hD_y-x)\phi_2\|\\
-8h\|f\|_\infty|\|(hD_y-x)\phi_1\|\|\phi_2\|\,,
\end{multline*}
from Lemma \ref{lem.comutLA}. Further, taking into account that 
\[\|(hD_y-x)\phi_1\|^2\leq \|\phi_1\|\|\mathscr{L}_h\phi_1\|\leq (e+d)\|\phi_1\|^2\,\]
and that
\[\begin{split}
\|(hD_y-x)\phi_2\|^2= \langle\phi_2,\mathscr{L}_h\phi_2\rangle=\langle\phi_2,\mathscr{L}_h\phi\rangle&=\langle\phi_2,(\mathscr{L}_h+h^\gamma V)\phi\rangle-\langle\phi_2,h^\gamma V\phi\rangle\\
&\leq (e+\delta+h^\gamma\|V\|_\infty)\|\phi_2\|\|\phi\|\,,
\end{split} 
\]
we get that
\begin{multline*}
\langle[\mathscr{L}_{h},i\mathscr{A}_h]\phi,\phi \rangle\geq \tilde{c}_0 h^{1+\alpha}\|\phi_1\|^2\\
-4h(e+\delta+h^\gamma\|V\|_\infty)^{\frac 12}\|f\|_\infty\|\phi_2\|^{\frac 32}\|\phi\|^{\frac 12}-8h(e+d)^{\frac 12}\|f\|_\infty\|\phi_1\|\|\phi_2\|\,.
\end{multline*}
From this and Lemma \ref{lem.orth.dec} it then follows that
\begin{equation*}
\langle[\mathscr{L}_{h},i\mathscr{A}_h]\phi,\phi \rangle\geq \left( \tilde{c}_0 h^{1+\alpha} - \left( 
8 (e+d+ h^\gamma \|V\|_\infty)^{\frac12} \|f\|_\infty (1+ c_h^{\frac 12}) + c_h h^\alpha \right) c_h h \right)  \|\phi\|^2\,.
\end{equation*}
Thus, bearing in mind that $d=h^\alpha$ in such a way that $c_h$ (which scales like $h^{\beta-\alpha}$) can be made arbitrarily small relative to $h^\alpha$ (as we have $\beta >2 \alpha$) in the asymptotic regime $h \downarrow 0$, we infer from the above estimate that
\[
\langle[\mathscr{L}_{h},i\mathscr{A}_h]\phi,\phi \rangle\geq \frac{3\tilde{c}_0}{4} h^{1+\alpha}\|\phi\|^2\,,
\]
whenever $h$ is sufficiently small.
Putting this together with \eqref{eq.Qh}-\eqref{eq.pt}, we get the result of Theorem \ref{theo.Mourre} upon replacing $\frac{\tilde{c}_0}{2}$ by $\tilde{c}_0$.

\subsection{Proof of Corollary \ref{cor.PAL}}
By Theorem \ref{theo.Mourre}, the self-adjoint operator $\mathscr{L}_{h,V}$ satisfies a Mourre estimate of type \eqref{eq-ME} on the interval $I$, associated with the local commutator $\mathscr{A}_h$ and a constant $c_0$ of size $h^{1+\alpha}$. 
Next, since $f(hD_y)$ is a bounded operator by definition of $f$, Lemmas \ref{lem.commutVA} and \ref{lem.comutLA} ensure us that the pair $(\mathscr{L}_{h,V},\mathscr{A}_h)$ fulfills the assumption \eqref{hyp.i} of Section \ref{sec-hyp} with a constant $c_1$ of order $\mathscr{O}(h^{1-\alpha})$. Moreover, we have
\[\begin{split}
[[\mathscr{L}_h,\mathscr{A}_h],\mathscr{A}_h]&=-4ih[(hD_y-x)f(hD_y),yf_h(hD_y)+f_h(hD_y)y]\\
&=-4ih\left(-2ih(hD_y-x)f'(hD_y)f(hD_y)-2ihf^2(hD_y)\right)\,,
\end{split}\]
by a straightforward computation. We deduce from this and the boundedness of the operators $f(hD_y)$ and $f'(hD_y)$ that 
$[[\mathscr{L}_h,\mathscr{A}_h],\mathscr{A}_h](\mathscr{L}_h+i)^{-1}$ is bounded and that its norm is of order $\mathscr{O}(h^{2-\alpha})$. Moreover, since $V(x,\cdot,h) \in S_{\frac\alpha 2}(\langle y\rangle^{-1})$ by assumption and $\mathscr{A}_h\in S_{\frac\alpha 2}(\langle y\rangle)$, we can check upon arguing in the same fashion as in the derivation of Lemma \ref{lem.commutVA} that $(\mathscr{L}_{h,V},\mathscr{A}_h)$ satisfies the assumption \eqref{hyp.ii} of
Section \ref{sec-hyp}, where the constant $c_2$ is of order $\mathscr{O}(h^{2-\alpha})$. 

Therefore, with reference to Section \ref{sec.PAL}, the statement of Corollary \ref{cor.PAL} follows directly from Theorem \ref{prop.B3} and the fact that we can actually track the powers of $h$ in Lemma \ref{lem.B1} and Theorem \ref{prop.B3}. As a matter of fact it can be checked that 
\[\left(\mathrm{dist}(I,J^c)\right)^{-1}\simeq h^{-\alpha}\,, \quad\ \eps_1\simeq h^{-1+4\alpha},\ \quad\ \eps_0\simeq h^{-1+4\alpha}\ \quad\ \mathrm{and}\ \quad \tilde c_0\gtrsim h^{1+\alpha}\]
in Lemma \ref{lem.B1}, whereas in Theorem \ref{prop.B3} we have 
\[K_1=\mathscr{O}(h^{-\frac12-\frac{\alpha}{2}}),\ \quad\ K_2=\mathscr{O}(h^{1-2\alpha}),\ \quad\ K=\mathscr{O}(h^{-1-\alpha}),\ \quad\ C(\eps_0)=\mathscr{O}(h^{-2+3\alpha})\]
and finally 
\[C=\mathscr{O}(h^{\min(-2+3\alpha,-3/2+\alpha,-1-\alpha)})\,.\]

\subsection*{Acknowledgments}
Nicolas Raymond and \'Eric Soccorsi are deeply grateful to the CIRM where this work was completed.
\'Eric Soccorsi is partially supported by the Agence Nationale de la Recherche (ANR) under grant ANR-17-CE40-0029.

\bibliographystyle{abbrv}
\bibliography{bib-RS3}
\end{document}